\newcommand\semihuge{\@setfontsize\semihuge{22.3}{22}}
\DeclareMathOperator*{\argmax}{arg\,max}
\newtheorem{corollary}{Corollary}
\newtheorem{theorem}{\bf Theorem}
\newtheorem{proposition}{\bf Proposition}
\newtheorem{lemma}{\bf Lemma}
\newtheorem{definition}{\bf Definition}
\begin{document}
	%
	% paper title
	% Titles are generally capitalized except for words such as a, an, and, as,
	% at, but, by, for, in, nor, of, on, or, the, to and up, which are usually
	% not capitalized unless they are the first or last word of the title.
	% Linebreaks \\ can be used within to get better formatting as desired.
	% Do not put math or special symbols in the title.
	\title{Deep Learning for Signal Authentication and Security in Massive Internet of Things Systems}
	\IEEEoverridecommandlockouts
	\author{\IEEEauthorblockN{Aidin Ferdowsi\IEEEauthorrefmark{1} and Walid Saad\IEEEauthorrefmark{1}\\}
		\IEEEauthorblockA{\IEEEauthorrefmark{1}
			Wireless@VT, Bradley Department of Electrical and Computer Engineering, \\ Virginia Tech, Blacksburg, VA, USA,
			Emails: \{aidin,walids\}@vt.edu.\\}\vspace{-1cm}
		\thanks{This research was supported by the U.S. National Science Foundation under Grants OAC-1541105, CNS-1446621, and CNS-1524634. A preliminary version of this work appears in the proceedings of IEEE ICC 2018 \cite{IoT2018ferdowsi}.}
		
	}
	\maketitle
	
	\maketitle
	
	% As a general rule, do not put math, special symbols or citations
	% in the abstract or keywords.

	% For peer review papers, you can put extra information on the cover
	% page as needed:
	% \ifCLASSOPTIONpeerreview
	% \begin{center} \bfseries EDICS Category: 3-BBND \end{center}
	% \fi
	%
	% For peerreview papers, this IEEEtran command inserts a page break and
	% creates the second title. It will be ignored for other modes.
	\IEEEpeerreviewmaketitle
	
\begin{abstract}	
Secure signal authentication is arguably one of the most challenging problems in the Internet of Things (IoT), due to the large-scale nature of the system and its susceptibility to man-in-the-middle and {data injection} attacks. In this paper, a novel { watermarking algorithm }is proposed for dynamic authentication of IoT signals to detect cyber attacks. {The proposed watermarking algorithm, based on a deep learning long short-term memory (LSTM) structure}, enables the IoT devices (IoTDs) to extract a set of stochastic features from their generated signal and dynamically watermark these features into the signal. This method enables the IoT gateway, which collects signals from the IoTDs, to effectively authenticate the reliability of the signals. Moreover, in massive IoT scenarios, since the gateway cannot authenticate all of the IoTDs simultaneously due to computational limitations, a game-theoretic framework is proposed to improve the gateway's decision making process by predicting vulnerable IoTDs. The mixed-strategy Nash equilibrium (MSNE) for this game is derived and the uniqueness of the expected utility at the equilibrium is proven. In the massive IoT system, due to the large set of available actions for the gateway, the MSNE is shown to be analytically challenging to derive, and, thus, a learning algorithm that converges to the MSNE is proposed. Moreover, in order to handle incomplete information scenarios in which the gateway cannot access the state of the unauthenticated IoTDs, a deep reinforcement learning algorithm is proposed to dynamically predict the state of unauthenticated IoTDs and allow the gateway to decide on which IoTDs to authenticate. Simulation results show that, with an attack detection delay of under 1 second, the messages can be transmitted from IoTDs with an almost $ 100\% $ reliability. The results also show that, by optimally predicting the set of vulnerable IoTDs, the proposed deep reinforcement learning algorithm reduces the number of compromised IoTDs by up to $30\%$, compared to an equal probability baseline.
\end{abstract}
\section{Introduction}
The Internet of Things (IoT) will encompass a massive number of devices that must reliably transmit a environmental observations to deliver a plethora of smart city applications \cite{IoT2017Dawy,IoT2017Abuzainab,IoT2015Islam,UAV2016Mozaffari,IoT2014Xu}. However, an effective deployment of IoT services requires near real-time, secure, and low complexity message transmission from the IoT devices (IoTDs) \cite{ali2018fast}. Most IoT architectures consist of four layers: perceptual, network, support, and application\cite{security2012Suo}. The perceptual layer is the most fundamental layer which collects all sorts of information from the physical world using devices such as accelerometers or radio frequency identification (RFID) tags. Due to the simplicity of the devices and components at the perceptual layer and their resource-constrained nature, securing the IoT signals at this layer is notoriously challenging\cite{security2012Suo}. 

Recently, a number of security solutions have been proposed for IoT signal authentication \cite{Physical2015Mukherjee, physical2015Trappe, lightweight2014Lee, security2016Yaman, mahalle2013identity, TURKANOVIC201496, Butun2016}. The work in \cite{Physical2015Mukherjee} investigated physical layer security techniques for securing IoT applications. These methods include optimal sensor censoring, channel-based bit flipping, probabilistic ciphering of quantized IoT signals, and artificial noise signal transmission. In \cite{physical2015Trappe}, the author suggested bridging the security gap in IoTDs by applying information theory and cryptography
at the physical layer of the IoT. An authentication protocol for the IoT is presented in \cite{lightweight2014Lee}, using lightweight encryption method in order to cope with constrained IoTDs. Moreover, in \cite{security2016Yaman}, the authors developed a learning mechanisms for fingerprinting and authenticating IoTDs and their environment. Other useful signal authentication approaches are found in \cite{mahalle2013identity,TURKANOVIC201496,Butun2016}.

To secure IoT-like cyber-physical systems (CPSs), the idea of watermarking has been studied in \cite{watermark2015Mo,Watermarking2017Satchidanandan,hespanhol2017dynamic,watermark2017Hosseini}. In \cite{watermark2015Mo}, a new method was proposed to watermark a predefined signal unto a CPS input signal so as to detect replay attacks. A dynamic watermarking algorithm was proposed in \cite{Watermarking2017Satchidanandan} for integrity attack detection in networked CPSs. In \cite{hespanhol2017dynamic}, the authors introduced a security scheme that ensures detection of data injection attacks using a non-stationary watermarking technique. Finally, the authors in \cite{watermark2017Hosseini} analyzed the optimality of Gaussian watermarked signals in presence of cyber attacks in linear time-variant IoT-like systems. Moreover, the security of massive IoT systems has gained attention in recent years \cite{2017Dorri,2012Sehgal,sun2017,2018Stergiou}. In \cite{2017Dorri}, a blockchain-based approach was proposed to provide a distributed security solution for the IoT. In \cite{2012Sehgal}, the authors demonstrated the feasibility of implementing existing device management protocols on resource-constrained devices. A \emph{cloud}-based algorithm was proposed in \cite{sun2017} to provide privacy service for resource-constrained IoTDs. Finally, the work in \cite{2018Stergiou} investigated how cloud computing can be securely integrated in large-scale IoT systems.

However, the authentication solutions of \cite{Physical2015Mukherjee,physical2015Trappe,lightweight2014Lee,security2016Yaman,mahalle2013identity,TURKANOVIC201496,Butun2016} remain highly complex for deployment at the IoT perceptual layer, and require high computational power. Moreover, these methods do not take into account dynamic data injection attacks in which the attacker collects data for a long time duration and uses it for designing stealthy attacks. Furthermore, the watermarking algorithms introduced in \cite{watermark2015Mo,Watermarking2017Satchidanandan,hespanhol2017dynamic,watermark2017Hosseini}, can be detrimental to the performance of a system such as the IoT since an augmented watermark is applied in parallel to the control signal of the system. This can, in turn, lead to a suboptimal performance for the system. In addition, the input signals to IoTDs include information such as temperature, heart rate, and location, which are not controllable and require changing the IoTDs' environment. Moreover, the works in \cite{2017Dorri,2012Sehgal,sun2017,2018Stergiou} do not consider resource constraints at the IoT gateway for authentication. In a large-scale IoT system, the gateway cannot authenticate all of the transmitted signals from the IoTDs due to the large amount of required computational resources and bandwidth. Therefore, in a practical IoT, the gateway must optimally and intelligently choose which IoTDs to authenticate.

The main contribution of this paper is a comprehensive framework that integrates new ideas from deep learning and game theory to enable computationally-efficient authentication of IoT signals and devices, in massive IoT systems. The key contributions include:
\begin{itemize}
	\item We propose a novel {watermarking} framework that enables the IoT's gateway to authenticate IoTD signals and detect the existence of a cyber attacker who seeks to degrade the performance of the IoT by changing the devices' output signal. The proposed { watermarking} algorithm uses {deep} long short-term memory (LSTM) \cite{chen2017machine} blocks to extract stochastic features such as spectral flatness, skewness, kurtosis, and central moments from IoT signal and watermarks these features inside the original signal. 
	\item This dynamic feature extraction approach allows the gateway to detect { \emph{dynamic data injection attacks} in which the attacker can record and process IoT signals, extract the watermarking key, and inject faulty data}. Moreover, the proposed {LSTM-based watermarking} effectively complements other security solutions, such as encryption, by reducing the complexity and latency of data injection attack detection.
	\item To enable the gateway to authenticate IoTDs in a massive IoT system and under resource constraints, we formulate a noncooperative game between the gateway and the attacker to derive the gateway's optimal action in predicting vulnerable IoTDs while considering its computational resource constraints. We derive the mixed-strategy Nash equilibrium of the game and prove the uniqueness of the gateway's expected utility at this equilibrium.
	\item We show that analytically finding the gateway's optimal strategy is computationally expensive in the massive IoT scenario. Therefore, we propose a learning algorithm based on \emph{fictitious play} that converges to the mixed-strategy Nash equilibrium. Finally, we {extend our analysis to} a practical case in which the gateway does not have complete information about all the IoTDs. To address this challenge we propose a deep reinforcement learning approach based on LSTM blocks, to learn the {security} state of the IoTDs based on their past states. We show that the gateway's expected utility using this proposed deep reinforcement learning method is higher than baseline scenarios in which the gateway authenticates all the IoTDs with the same probability.
\end{itemize}

Simulation results show that, using the proposed {watermarking approach} and for a latency of under 1 second, the IoT signals can be reliably transmitted from IoTDs to the gateway. Moreover, in a massive IoT scenario, the learning algorithms improves the protection of the system by reducing by about $ 30\% $ the number of compromised IoTDs.

The rest of the paper is organized as follows. Section \ref{sect:sensor} introduces the system model for IoTD-gateway signal transmission in large-scale IoT systems and proposes the deep learning-based dynamic watermarking for IoTD-gateway signal authentication. Section \ref{sect:largescale} analyzes the authentication scenario in the case of a massive IoT with limited resources at the gateway. Section \ref{sect:sim} presents the simulation results and their analysis while conclusions are drawn in Section \ref{sect:conc}.

\section{IoT Signal Authentication: System Model and Deep Learning Solution}\label{sect:sensor}
Consider a massive IoT system having a set $\mathcal{N}$ of $ N $ IoTDs communicating with a gateway {such as a base station as shown in Fig. \ref{fig:system}\cite{Aazam2014}}. Any IoTD $ i $ in the system generates a signal $ y_i(t) $ at time step $ t $ with sampling frequency $ {f_i^s} $, and transmits this signal to the gateway which uses the received signal for estimation and control of the IoTD operation. In this system, we consider an adversary that seeks to compromise the IoTD by collecting the data from the communication link between the IoTD and the gateway, and, then, manipulates the transmitted signal. In this case, the transmitted signal from each IoTD $ i $ will be $ \bar{y}_i(t) \neq y_i(t) $ which will cause an estimation error at the gateway. Therefore, the gateway must implement an attack detection mechanism to differentiate between honest and false signals received from an IoTD. 

\begin{figure}[!t]
	\centering
	\includegraphics[width=\columnwidth]{./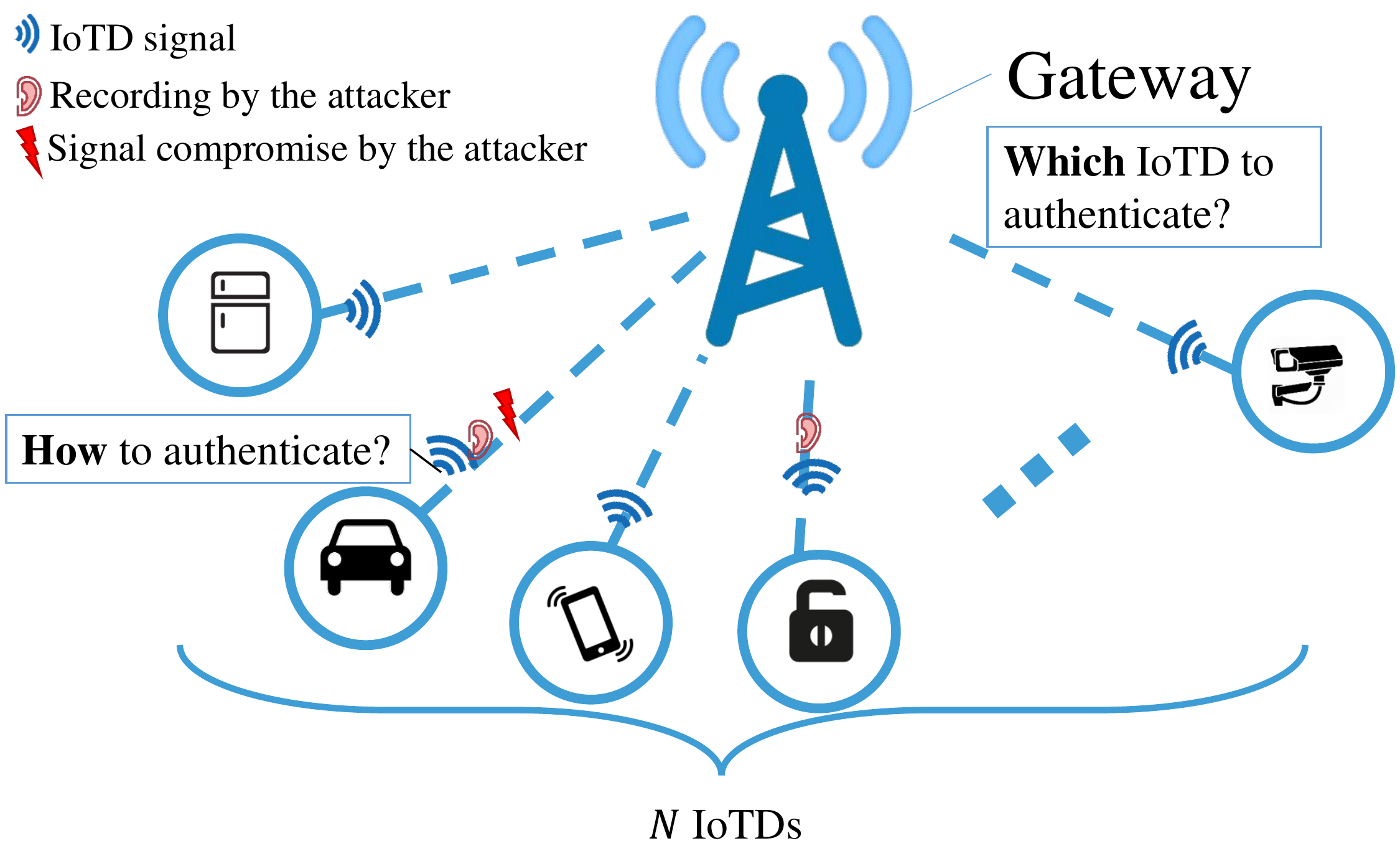}
	\captionsetup{justification=justified,singlelinecheck=false}
	\vspace{-4mm}
	\caption{\small
		 A massive IoT system with heterogeneous IoTDs transmitting their signals to a gateway. The gateway must solve two inter-related problems: 1) How to authenticate the IoTDs and 2) Which IoTDs to authenticate.}
	\label{fig:system}
	\vspace{-6mm}
\end{figure}

First, we analyze the signal authentication process between a single IoTD and the gateway and, then, we propose a dynamic watermarking framework based on deep learning to authenticate this signal communication with a very small delay. Subsequently, in Section \ref{sect:largescale}, we study the case in which, due to the massive nature of the system, the gateway is unable to authenticate all of the IoTDs and must choose an optimal subset to authenticate.

\subsection{Spread Spectrum Watermarking}
Watermarking uses a hidden, predefined non-perceptual code (bit stream) inside a multimedia signal to authenticate the ownership of such signals. One of the the most widely used watermarking methods is spread spectrum (SS) \cite{Improved2003Malvar} in which a key pseudo-noise sequence is added to the original signal. The watermarked signal from each IoTD $ i $ can then be written as follows:
\begin{align}\label{watermarking}
	w_i(t)=y_i(t)+\beta_i b p_i(t) \quad \textrm{for } t=1,\dots,n_i,
\end{align}
where $ w_i $ is the IoTD $ i $'s watermarked signal, $ p_i $ is a pseudo-noise binary sequence of $ +1 $ and $ -1 $ for IoTD $ i $, $ \beta_i $ is the IoTD $ i $'s relative power of the pseudo-noise signal to the original signal, $ b_i $ is the hidden bit in the signal which can take values $ +1 $ and $ -1 $, and $ n_i $ is the number of samples (frame length) of IoTD $ i $'s original signal used to hide a single bit. To extract the watermarked bit, the gateway receives the watermarked signal from each IoTD $ i $ and correlates it with the key pseudo-noise sequence. The extraction process will be:
	\begin{align}\label{correlation}
		\tilde{b}_i&=\frac{<w_i,p_i>_{n_i}}{\beta_i n_i}=\frac{<y_i,p_i>_{n_i}+\beta_i b_i<p_i,p_i>_{n_i}}{\beta_i n_i}=\tilde{y}_i+b_i,
	\end{align} 
where, for $ \tilde{b}_i>0 $, the extracted bit is $ 1 $, for $\tilde{b}_i<0 $, the extracted bit is $ -1 $ with $ <w_i,p_i>_{n_i} $ being the inner production of $ n_i $ samples of $ w_i $ and $ p_i $. $ p_i(t) $ and $ y_i(t) $ are independent stochastic variables at time $ t $. We assume that $ y_i(t) $ has mean $ \mu_i $ and variance $ \sigma_i^2 $. Next, we analyze the bit error rate of the extracted bit to evaluate the proposed watermarking scheme's performance.
\begin{lemma} \label{Theorem:BER}
	In the proposed SS watermarking scheme, the bit extraction error for IoTD $ i $ is $ \frac{1}{2}\textrm{erfc}\left(\frac{\beta_i\sqrt{n_i}}{\sigma_i \sqrt{2}}\right) $ {if $ n_{i_+} = n_{i_-} $, where $ n_{i_+} $ and $ n_{i_-} $ are the number of samples in $ p(t) $ with values equal to $ +1 $ and $ -1 $, respectively.}
\end{lemma}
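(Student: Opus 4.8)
The plan is to treat the extracted quantity $\tilde{b}_i=\tilde{y}_i+b_i$ from \eqref{correlation} as a noisy decision statistic, where $b_i\in\{+1,-1\}$ is the transmitted bit and
\begin{align}
\tilde{y}_i=\frac{<y_i,p_i>_{n_i}}{\beta_i n_i}=\frac{1}{\beta_i n_i}\sum_{t=1}^{n_i}y_i(t)\,p_i(t)
\end{align}
is the interference contributed by the unknown source signal. Since the detector declares a $+1$ whenever $\tilde{b}_i>0$ and a $-1$ whenever $\tilde{b}_i<0$, a bit error occurs exactly when the interference $\tilde{y}_i$ pushes the statistic across the origin, i.e.\ when $\tilde{y}_i<-1$ given $b_i=+1$, or $\tilde{y}_i>1$ given $b_i=-1$. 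The first step is therefore to reduce the bit error probability to a tail probability of the single random variable $\tilde{y}_i$; by symmetry the two conditional error events have equal probability, so it suffices to evaluate $\Pr(\tilde{y}_i<-1)$.

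Next I would characterize the first two moments of $\tilde{y}_i$. Using the independence of $p_i(t)$ and $y_i(t)$ together with $\E[y_i(t)]=\mu_i$, the mean is
\begin{align}
\E[\tilde{y}_i]=\frac{\mu_i}{\beta_i n_i}\sum_{t=1}^{n_i}p_i(t)=\frac{\mu_i\,(n_{i_+}-n_{i_-})}{\beta_i n_i}.
\end{align}
This is exactly where the hypothesis $n_{i_+}=n_{i_-}$ enters: a balanced pseudo-noise sequence forces $\E[\tilde{y}_i]=0$, which is what makes the decision threshold symmetric and the final expression free of any bias term. For the variance I would exploit $p_i(t)^2=1$ together with the fact that the $y_i(t)$ are uncorrelated across samples, giving
\begin{align}
\mathrm{Var}(\tilde{y}_i)=\frac{1}{\beta_i^2 n_i^2}\sum_{t=1}^{n_i}p_i(t)^2\,\sigma_i^2=\frac{\sigma_i^2}{\beta_i^2 n_i}.
\end{align}

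Finally, I would invoke the central limit theorem: for a frame length $n_i$ large enough, $\tilde{y}_i$ is a normalized sum of $n_i$ independent terms and is therefore well approximated by a zero-mean Gaussian with the variance computed above. Standardizing, the error probability becomes
\begin{align}
\Pr(\tilde{y}_i<-1)=\Pr\!\left(\frac{\tilde{y}_i}{\sigma_i/(\beta_i\sqrt{n_i})}<-\frac{\beta_i\sqrt{n_i}}{\sigma_i}\right)=\frac{1}{2}\,\mathrm{erfc}\!\left(\frac{\beta_i\sqrt{n_i}}{\sigma_i\sqrt{2}}\right),
\end{align}
using the identity $\Pr(Z<-x)=\tfrac12\mathrm{erfc}(x/\sqrt2)$ for a standard normal $Z$, which matches the claimed expression.

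The main obstacle I anticipate is justifying the Gaussian step rigorously: the result is stated as an exact error probability, yet it rests on the central limit theorem, which is only asymptotic in $n_i$ and also presumes that the samples $y_i(t)$ are independent (or at least weakly dependent) rather than merely sharing a common mean and variance. I would either state the i.i.d.\ (or suitable mixing) assumption on $y_i(t)$ explicitly and present the erfc formula as the large-frame approximation, or, if an exact statement is desired, argue that for the Gaussian source signals considered later in the paper the statistic $\tilde{y}_i$ is exactly normal for every $n_i$, so no approximation is required. The remaining moment computations are routine once the balance condition $n_{i_+}=n_{i_-}$ is used to eliminate the mean.
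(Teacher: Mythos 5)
Your proposal is correct and follows essentially the same route as the paper's proof: both reduce the error event to a tail probability of the interference term $\tilde{y}_i$, use the balance condition $n_{i_+}=n_{i_-}$ to zero its mean, invoke the central limit theorem to approximate $\tilde{y}_i$ by $\mathcal{N}\left(0,\tfrac{\sigma_i^2}{\beta_i^2 n_i}\right)$, and evaluate the resulting Gaussian tail as $\tfrac{1}{2}\mathrm{erfc}\left(\tfrac{\beta_i\sqrt{n_i}}{\sigma_i\sqrt{2}}\right)$. The only cosmetic difference is that the paper splits the sum over $\mathcal{P}_i^+$ and $\mathcal{P}_i^-$ and applies the CLT to each piece before recombining, whereas you compute the moments of the full sum directly; your closing caveat about the CLT being asymptotic and requiring an i.i.d.\ assumption on $y_i(t)$ matches the qualification the paper itself makes.
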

\begin{proof}
	For $ \tilde{y}_i $, we can write:
	\begin{align}
		\tilde{y}_i&=\frac{1}{\beta_i n_i} \sum_{t=1}^{n} y_i(t)p_i(t)\nonumber=\frac{1}{\beta_i n_i}\left( \sum_{t\in \mathcal{P}_i^+} y_i(t)-\sum_{t\in \mathcal{P}_i^-} y_i(t)\right)\\\label{sum}
		&=\frac{n_{i_+}}{\beta_i n_i}\frac{1}{n_{i_+}}\sum_{t\in \mathcal{P}_i^+} y_i(t)-\frac{n_{i^-}}{\beta_i n_i}\frac{1}{n_{i_-}}\sum_{t\in \mathcal{P}_i^-} y_i(t),
	\end{align}
	where $ \mathcal{P}_i^+=\{t|p_i(t)=1\} $,  $ \mathcal{P}_i^-=\{t|p_i(t)=-1\} $. Since \eqref{sum} can be expressed as a sum of i.i.d variables, then, {for large values of $ n_{i_+} $ and $ n_{i_-} $} using the central limit theorem, we can write \eqref{sum} as a linear combination of two Gaussian distributions as $\tilde{y_i}=Y_{i_1}-Y_{i_2}, $ 
	where 
	\begin{align}
		Y_{i_1} \sim \mathcal{N}\left(\frac{{n_{i_+}}\mu_i}{\beta_i n_i},\frac{n_{i_+}}{\beta_i^2 n_i^2}\sigma_i^2\right),\,\,\,Y_{i_2} \sim \mathcal{N}\left(\frac{{n_{i_-}}\mu_i}{\beta_i n_i},\frac{n_{i_-}}{\beta_i^2 n_i^2}\sigma_i^2\right).
	\end{align}
	Since $ \tilde{y} $ is a linear combination of two independent Gaussian distributions, we have:
	\begin{align}
			\tilde{y}_i& \sim \mathcal{N} \left(\frac{{n_{i_+}}\mu_i}{\beta_i n_i}-\frac{{n_{i_-}}\mu_i}{\beta_i n_i},\frac{n_{i_+}}{\beta_i^2 n_i^2}\sigma^2+\frac{n_{i_-}}{\beta_i^2 n_i^2}\sigma_i^2\right),\nonumber\\
			\Rightarrow
		\tilde{y}_i&\sim \mathcal{N} \left(\frac{{(n_{i_+}-n_{i_-})}\mu_i}{\beta_i n_i},\frac{1}{\beta_i^2 n_i}\sigma_i^2\right).
	\end{align}
	Since $ n_{i_+}$ and $ n_{i_-} $ are design parameters, we choose $ n_{i_+}\hspace{-1mm}=n_{i_-} $. Thus, we have $
	\tilde{y}_i\hspace{-0.5mm}\sim\hspace{-0.5mm} \mathcal{N}\hspace{-0.5mm} \left(\hspace{-0.5mm}0,\frac{1}{\beta_i^2 n_i}\sigma_i^2\hspace{-0.5mm}\right)\hspace{-0.5mm}.$ Now, we can show that $ \tilde{b} $ is a Gaussian variable since it is a summation of a constant value with a Gaussian variable: 
	$
	\tilde{b_i} \sim \mathcal{N}\left(E(\tilde{b}_i)=b_i,\sigma^2_{\tilde{b}_i}=\frac{\sigma_i^2}{\beta_i^2 n_i}\right).
	$
	Then, to analyze the probability of error we consider $ b_i=1 $. In this case an error occurs when $ \tilde{b}_i<0 $, therefore, the probability of an error is:
	\begin{align}
	\textrm{Pr}\left\{\tilde{b}_i<0|b_i=1\right\}&=\frac{1}{2}\textrm{erfc}\left(\frac{E(\tilde{b}_i)}{\sigma_{\tilde{b}_i}\sqrt{2}}\right)=\frac{1}{2}\textrm{erfc}\left(\frac{\beta_i\sqrt{n_i}}{\sigma_i \sqrt{2}}\right).\label{errprob}
	\end{align}
	The same error probability can be obtained for $ b_i=-1 $. 
\end{proof}
From \eqref{errprob}, we can observe that, for large values of $ \beta_i $ and $ n_i $, the bit extraction error goes to zero. However, selecting large values for $ \beta_i $ and $ n_i $ will incur latency and computational costs for IoTDs and gateway, as discussed next.
\subsection{Static Watermarking for IoTD Attack Detection}
Now, using the SS method, we present a technique for authenticating the signals transmitted from an IoTD to the gateway. We first generate a random pseudo-noise binary sequence with $ n_i $ samples. Also, for every IoTD $ i $, we define a bit stream $ s_i $ with $ n_{s_i} $ samples. Then, using \eqref{watermarking}, we embed every bit of $ s_i $ in $ n_i $ samples of $ y_i $. Therefore, for any bit stream $ s_i $, we use $ n_in_{s_i} $ samples of $ y_i $, and this embedding procedure will repeat every $ n_in_{s_i} $ samples of $ y_i $. At the gateway, using \eqref{correlation}, we extract the bit stream. In case of a cyber attack, the received signal in the gateway will be $ \bar{y}_i(t) $ rather than $ w_i(t) $, and, hence, the extracted bit stream will differ from $ s_i $. Thus, the gateway will trigger an alarm for declaring the existence of a cyber attack. Fig. \ref{fig:static} shows the block diagram of static watermarking for attack detection at the level of an IoTD.
\begin{figure}[!t]
	\centering
	\includegraphics[width=\columnwidth]{./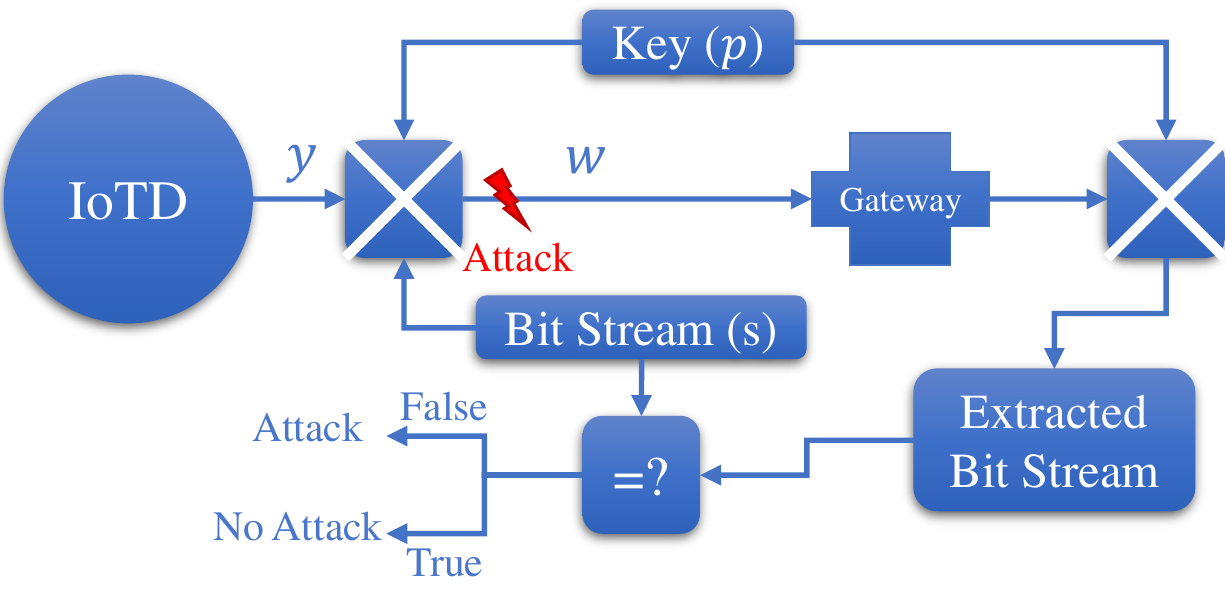}
	\vspace{-4mm}
	\caption{\small Static watermarking for attack detection.}
	\label{fig:static}
	\vspace{-6mm}
\end{figure}

In our proposed watermarking scheme, $ \beta_i $, $ n_i $, and $ n_{s_i} $ play crucial roles in the security of a given IoTD. The value of $ \beta_i $ must be much smaller than the value of $ \sigma_i $. Indeed, for comparable values of $ \beta_i $, an attacker can extract the key $ p_i $ and bit stream $ s_i $, since $ w_i(t)\simeq\beta_i b_i p_i(t) $ if $ |\beta_i b_i p(t)|= \beta_i >>|y_i(t)| $. Therefore, we must choose small values for $ \beta_i $. However, from Lemma \ref{Theorem:BER}, we know that a small $\beta_i$ will yields a higher bit error rate. To overcome this issue, we have to increase the pseudo-noise key length, $ n_i $. Although increasing the value of $ n_i $ will reduce the bit error rate, for large values of $ n_i $, the bit extraction procedure in \eqref{correlation} will result in higher computation load and will also cause higher latency since the gateway must wait for $ n_in_{s_i} $ samples from the IoTD to detect the attack. Moreover, large values of $ n_{s_i} $ will also cause larger delay at the gateway. Therefore, next, we propose a method to choose suitable values for these three parameters.

\begin{theorem}\label{Theorem:hyperparameters}
	To reduce the attacker's ability to extract the hidden bit stream as well as minimize the attack detection delay of the watermarking scheme $ \beta_i,n_i$, and $n_{s_i} $ must be selected to satisfy the following conditions:
	\begin{align}
		\frac{1}{2}\textrm{erfc}\left(\frac{(1+\frac{\mu_{1_i}}{\beta_i^2n_i^2})\beta^2_in_i\sqrt{n_i}}{ \sqrt{2(\sigma_{1_i}^2+2\sigma_i^2)}}\right)&\geq1-\underbar{P}, \label{eq1}\\
		\frac{1}{2}\textrm{erfc}\left(\frac{\beta_i\sqrt{n_i}}{\sigma_i \sqrt{2}}\right)&\leq \bar{\text{P}}, \label{eq2}\\
		n_{s_i} &\leq \frac{d{f_i^s}}{n_i} \label{eq3},
	\end{align}
	where $ d $ is the acceptable delay (in seconds) for attack detection, $ \mu_{1_i} $ and $ \sigma^2_{1_i} $ are the mean and variance of the multiplication of random variables $ y_{1_i}(t) $ and $ y_{2_i}(t) $ with the same distribution as $ y_i(t) $, $ \underbar{P} $ is our desired probability of unsuccessful attack, and $ \bar{\text{P}}$ is our desired bit extraction error probability. 
\end{theorem}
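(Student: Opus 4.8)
The plan is to establish the three conditions separately, since each controls a distinct design goal: \eqref{eq2} bounds the legitimate receiver's reliability, \eqref{eq3} bounds the detection latency, and \eqref{eq1} bounds the attacker's chance of recovering the watermark. The first two are immediate, and the third is where essentially all of the work lies.

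For \eqref{eq2} I would simply invoke Lemma \ref{Theorem:BER}: the gateway's bit-extraction error equals $\tfrac12\textrm{erfc}(\beta_i\sqrt{n_i}/(\sigma_i\sqrt2))$ whenever $n_{i_+}=n_{i_-}$, so requiring this error not to exceed the target $\bar{\text{P}}$ is exactly \eqref{eq2}. For \eqref{eq3} I would track samples and time: embedding an $n_{s_i}$-bit stream consumes $n_i n_{s_i}$ samples of $y_i$ (each bit using $n_i$ samples, per the static scheme), so at sampling rate $f_i^s$ the gateway must wait $n_i n_{s_i}/f_i^s$ seconds before a full stream is available for comparison; forcing this delay below the admissible budget $d$ and solving for $n_{s_i}$ yields \eqref{eq3}.

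The substance is \eqref{eq1}, which quantifies a \emph{key-agnostic} attacker who does not know the pseudo-noise key $p_i$ and therefore cannot run the correlator \eqref{correlation}. I would model the attacker's best surrogate as a second-order, self-referential statistic, namely correlating two watermarked observations $w_i^{(1)}$ and $w_i^{(2)}$ that carry the same hidden bit, so that the unknown key is eliminated through the watermark-against-watermark term $\beta_i^2\langle p_i,p_i\rangle=\beta_i^2 n_i$, the only deterministic component the attacker can exploit. Expanding $\langle w_i^{(1)},w_i^{(2)}\rangle$ produces three groups: the signal-against-signal term $\sum_t y_i^{(1)}(t)y_i^{(2)}(t)$, which is a sum of products of two independent copies of $y_i$ and thus has per-term mean $\mu_{1_i}$ and variance $\sigma_{1_i}^2$; two signal-against-key cross terms, each contributing a $\sigma_i^2$-type noise component and vanishing in mean because $n_{i_+}=n_{i_-}$; and the deterministic $\beta_i^2 n_i$ term. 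Exactly as in the proof of Lemma \ref{Theorem:BER}, I would invoke the central limit theorem to treat the normalized statistic as Gaussian, read off its mean and variance from these three groups, and write the attacker's bit-extraction error as $\tfrac12\textrm{erfc}$ of the resulting mean-to-standard-deviation ratio; imposing that this failure probability stay at least $1-\underbar{P}$ gives \eqref{eq1}.

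The main obstacle is the bookkeeping in \eqref{eq1}: one must identify the attacker's optimal key-free statistic, compute the first two moments of the product $y_{1_i}(t)y_{2_i}(t)$ and of the cross terms, and justify the independence and Gaussianity needed to collapse everything into a single erfc. The qualitative payoff worth emphasizing is that, lacking the key, the attacker's effective signal scales as $\beta_i^2$ whereas the gateway's in \eqref{eq2} scales as $\beta_i$; since $\beta_i\ll\sigma_i$, the attacker's erfc argument is far smaller, so its erfc is large and the attack fails even when the gateway decodes reliably. This is precisely why the two erfc inequalities \eqref{eq1}--\eqref{eq2} can be met simultaneously by choosing $\beta_i$ small and $n_i$ large.
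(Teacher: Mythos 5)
Your proposal matches the paper's own proof in Appendix~\ref{App:hyperparameters}: the paper likewise disposes of \eqref{eq2} via Lemma~\ref{Theorem:BER}, obtains \eqref{eq3} from the $n_i n_{s_i}/f_i^s$ waiting time, and derives \eqref{eq1} by having the attacker correlate two watermarked signals $w_{i_1},w_{i_2}$ carrying the same bit, expanding into the signal--signal term (moments $\mu_{1_i},\sigma_{1_i}^2$), two zero-mean signal--key cross terms, and the deterministic $\beta_i^2$ term, then applying the Gaussian approximation of Lemma~\ref{Theorem:BER} to get the erfc expression and imposing $\geq 1-\underbar{P}$. Your closing observation about the $\beta_i^2$ versus $\beta_i$ scaling is a nice addition, but the argument itself is the same as the paper's.
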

\begin{proof}
	See Appendix ~\ref{App:hyperparameters}.
\end{proof}
Using \eqref{eq1}, \eqref{eq2}, and \eqref{eq3}, we can find the values for the three parameters which satisfy our performance and delay constraints. For example, in Fig. \ref{fig:hyper}, we consider a scenario in which $ \mu_{1_i} = 0 $, $ \delta_{1_i}=0.5 $, $ \delta_i=0.5 $, $ \underbar{P}=0.05$, and $\bar{\text{P}}=0.01 $. We can see from Fig. \ref{fig:hyper} that, $ \beta_i=0.5 $ and $ n_i=10 $ can satisfy conditions \eqref{eq1} and \eqref{eq2}. Now, if we consider that $ f^s_i=1000 $, from condition \eqref{eq3}, to be able to have a delay at most $ d=0.1 $ seconds, we must have $ n_{s_i}=10 $. The proposed SS watermarking method can detect a cyber attack which can only change the transmitted data from an IoTD to the gateway. By choosing optimal values for the three watermarking parameters, the gateway can authenticate the transmitted data. Consider a case in which an attacker can also collect data from the IoTDs. Here, the attacker can launch a more complex {\emph{dynamic data injection attack} by recording the transmitted data from the IoTD and, then, summing the recorded data for a long period of time. Such an attack can potentially reveal the key $ p $.} For example, if the attacker collects the data for $ m $ windows of size $ nn_s $ and adds this data together, it obtains: $
\bar{w}_{m_i}(t)=\sum_{i=1}^m y_i^j(t) +m\beta_i b_i p_i(t),
$
where $ y_i^j $ is the signal received in window $ j$ from an IoTD, and $ \bar{w}_{m_i} $ is the summation of collected data. If we consider $ \sigma^2_{m_i} $ as the variance of the sum of $ m $ random variable $ y_i^1(t),\dots,y_i^m(t) $, then, there will exist a value for $ m $ where $ m^2\beta_i^2 >> \sigma_{m_i}^2 $. Hence, the attacker can use $ \bar{w}_{m_i} \simeq m\beta_i b_i p_i $ as the key for watermarked signal. This attack is successful because the embedded stream $ s $ is static, at all times. However, if the bit stream $ s_i $ changes dynamically in each window of $ n_in_{s_i} $ samples then, the system can deter such a dynamic data injection attack. Next, we propose a dynamic bit stream generation using deep learning.
\begin{figure*}[!t]
	\begin{subfigure}{0.5\textwidth}
		\centering
		\includegraphics[width=\columnwidth]{./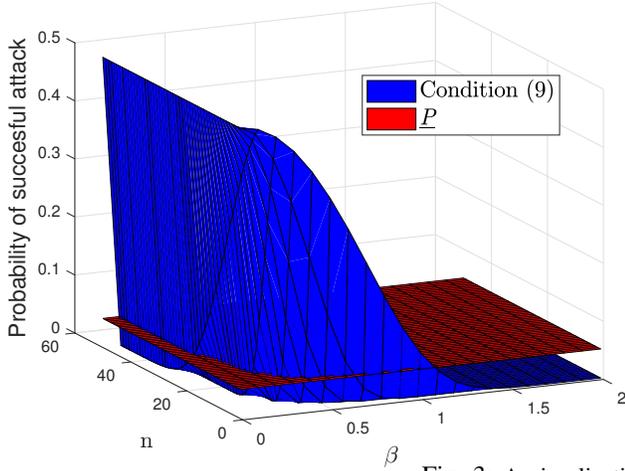}
	\end{subfigure}
	\begin{subfigure}{0.5\textwidth}
		\centering
		\includegraphics[width=\columnwidth]{./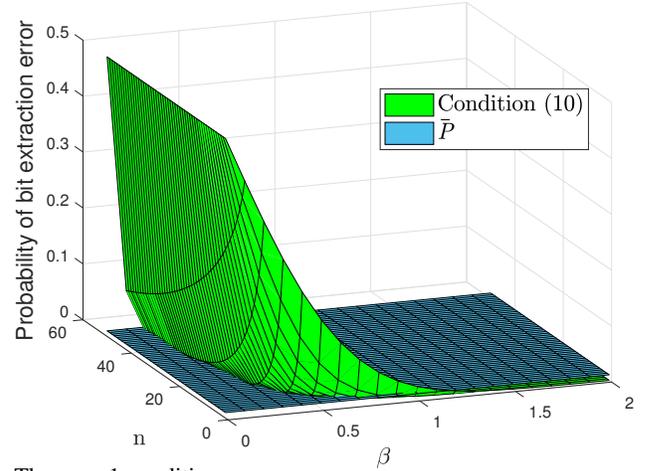}
	\end{subfigure}
	\vspace{-4mm}
	\caption{\small A visualization for Theorem \ref{Theorem:hyperparameters} conditions. }
	\label{fig:hyper}
	\vspace{-6mm}
\end{figure*}
\subsection{Deep Learning for Dynamic IoT Signal Watermarking}\label{sect:DeepMark}
To improve our authentication scheme, we propose a novel deep learning watermarking method for dynamically generating the bit stream $ s_i $ which can thwart {dynamic data injection} attacks. In the proposed dynamic watermarking scheme, we use the fingerprints of the signal $ y_i $ generated by an IoTD to dynamically update the bit stream, $ s_i $. Signal fingerprints are unique identifiers of a signal that can be mapped to a bit stream. Signal stochastic characteristics such as spectral flatness, central moments, skewness, and  kurtosis can be used for extraction of fingerprints from signals \cite{audio2002Haitsma,Wavelet2012Bertoncini,Wavelet1995Learned,Mellin2012Harley}. Due to time dependence of the IoTD signal stream on past time steps, we use the powerful deep LSTM framework, one of the most effective deep learning methods for sequence analysis, to extract the IoTD signal fingerprints\cite{chen2017machine,challita2017proactive}, and \cite{speech2013Graves}. 
\subsubsection{Introduction to LSTM Cells}
{
Recurrent neural networks (RNNs) are a special type of artificial neural networks (ANNs) which are useful for time series analysis \cite{chen2017machine}. ANNs are function approximators which map an input vector to a target vector. They are made of layers of artificial neurons which take a vector as an input, aggregate such input vector, and pass the aggregation through an \emph{activation function}. The output of the last layer's activation function can be considered as the target vector. While such ANNs map an input vector to an output vector, RNNs also feedback the output to their input which makes them suitable to analyze a time dependence in the input. Fig. \ref{fig:RNNs} shows a normal neuron and a recurrent neuron which are used in different ANN architectures. Each of these neurons will have a number of hyperparameters called \emph{bias} and \emph{weights} in the aggregation and activation steps. Thus, the aim of using this RNN is to derive those hyperparameters such that one can find an approximation of mapping from input to output. One of the widely used methods to train the RNN and derive the hyperparamters is the \emph{gradient descent} algorithm \cite{chen2017machine}, which uses the chain rule and gradient of the RNN's layers to find the optimal hyperparamters which minimize the difference between the RNN's output and the target vector. However, RNNs have a \emph{vanishing gradient} problem \cite{speech2013Graves} during training phase which means that the gradient of layers close to the input does not change much compared
to layers close to the output which makes the convergence of the gradient descent algorithm challenging. Moreover, conventional RNNs such as the ones shown in Fig. \ref{fig:RNNs} tend to forget the information about past samples and emphasize on learning from recent samples. To overcome these challenges, we use LSTM blocks \cite{speech2013Graves} which have three main components as shown in Fig. \ref{fig:LSTM}: 1) A forget gate which receives an extra input called the cell state input and learns how much it should memorize or forget from the past, 2) An input gate which aggregates the output of past steps and the current input and passes it through an activation function as done in a conventional RNN, and 3) An output gate which combines the current cell state and the output of input gate and generates the LSTM output. Next, we explain how we use the LSTMs for IoTD watermarking.

}

\begin{figure}[t]
	\centering
	\includegraphics[width=\columnwidth]{./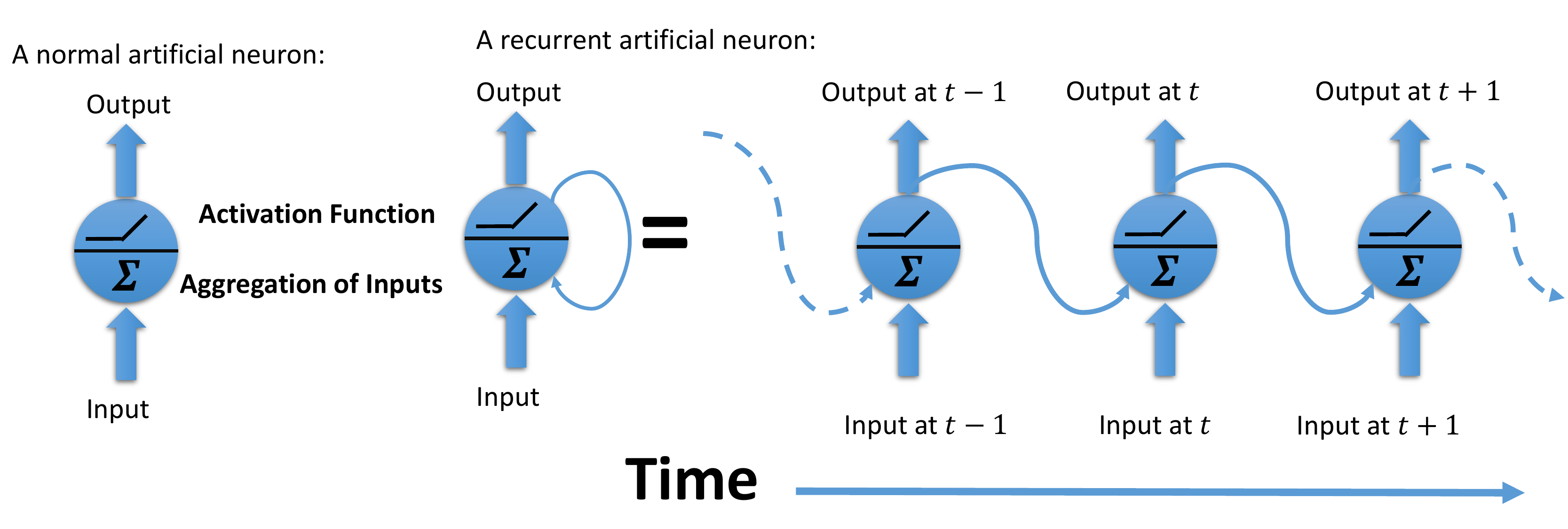}
	\vspace{-4mm}
	\caption{\small A block diagram of artificial neurons used in ANNs.}
	\label{fig:RNNs}
	\vspace{-2mm}
\end{figure}

\begin{figure}[t]
	\centering
	\includegraphics[width=\columnwidth]{./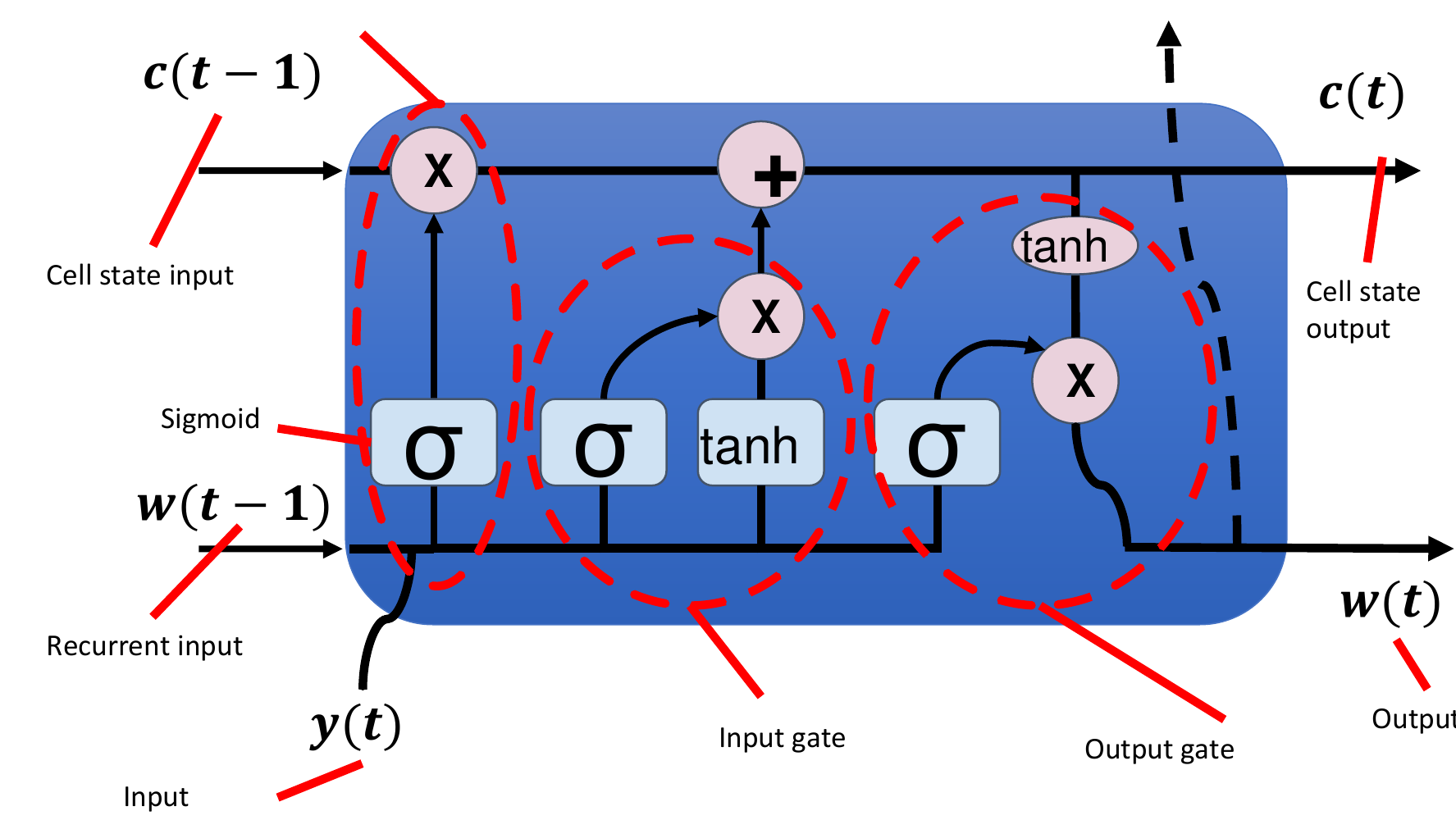}
	\vspace{-4mm}
	\caption{\small A generic LSTM block architecture.}
	\label{fig:LSTM}	
	\vspace{-4mm}
\end{figure}
\subsubsection{LSTM for Dynamic Signal Watermarking at the IoTDS}
To dynamically extract fingerprints from IoTD signals, we use an LSTM algorithm that allows an IoTD to update the bit stream based on the sequence of generated data. An LSTM algorithm processes an input $ \left(y_i(1),\dots,y_i(n_in_{s_i})\right) $ by adding new information into a memory, and using gates which control the extent to which new information should be memorized, old information should be forgotten, and current information should be used. The output of an LSTM algorithm will be impacted by the network activation at previous time steps and, hence, LSTMs are suitable for our IoT application in which we want to extract fingerprints from signals which are dependent on previous time steps.

\begin{figure*}[t]
	\begin{subfigure}{0.5\textwidth}
		\centering
		\includegraphics[width=0.65\columnwidth]{./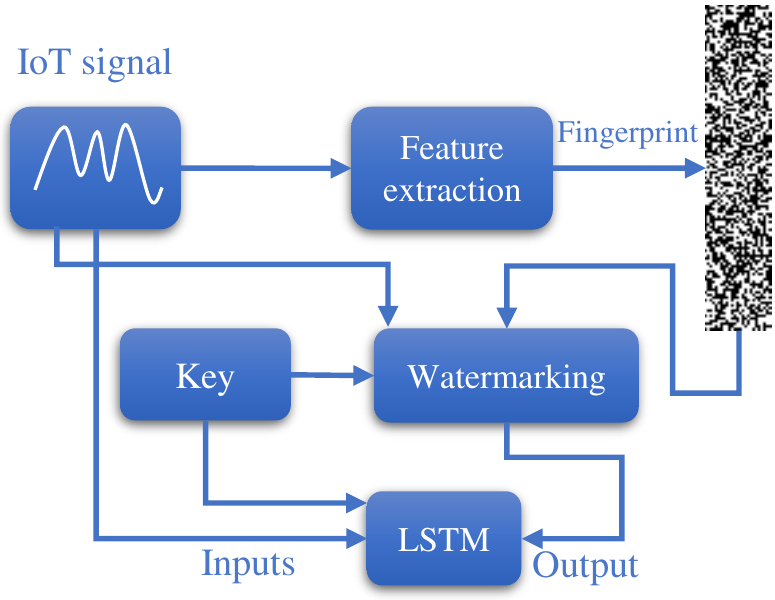}
		\caption{Training phase of an LSTM in an IoTD.}
		\label{fig:training}
	\end{subfigure}
	\begin{subfigure}{0.5\textwidth}
		\centering
		\includegraphics[width=\columnwidth]{./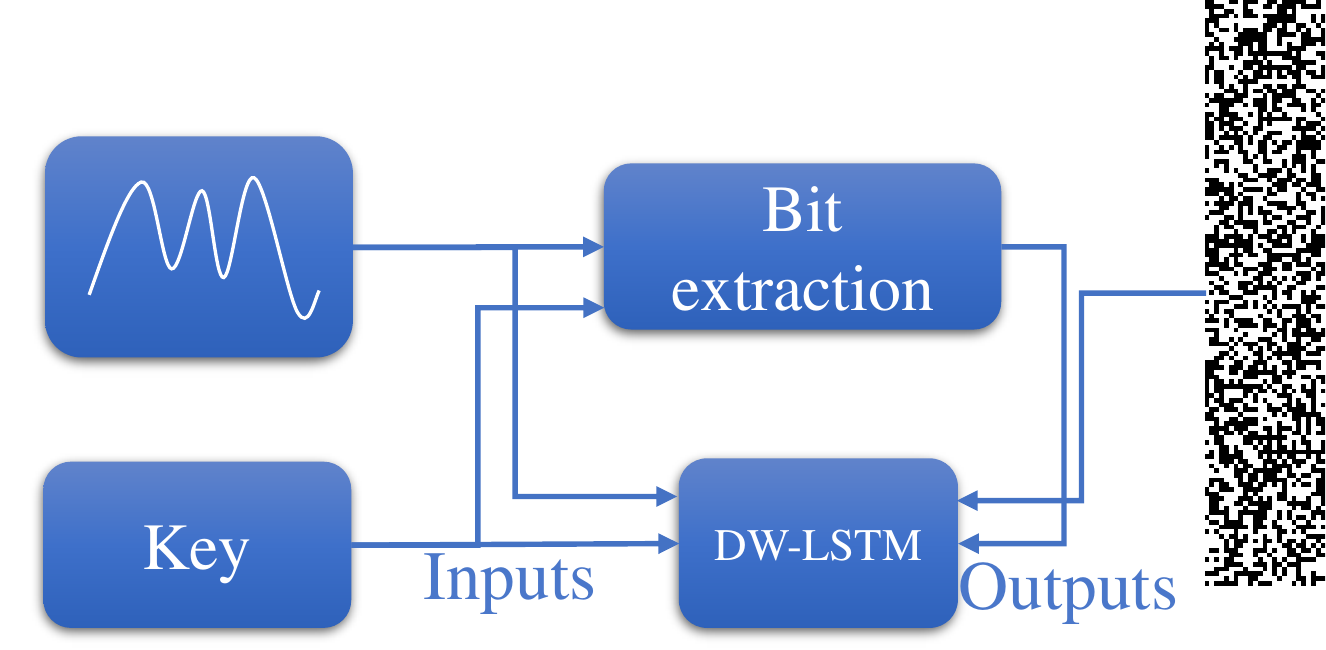}
		\caption{Training phase of an LSTM at the gateway.}
		\label{fig:trainingcpu}
	\end{subfigure}
	\caption{\small Training phase of the proposed deep learning methods.}
	\label{fig:train}
	\vspace{-6mm}
\end{figure*}
During the training phase, the parameters of the LSTM algorithm are learned from a given training dataset of different IoTDs such as accelerometer, gyroscope and positioning devices. As done in \cite{audio2002Haitsma,Wavelet2012Bertoncini,Wavelet1995Learned,Mellin2012Harley}, we choose spectral flatness, mean, variance, skewness, and kurtosis as features that are extracted from a signal of length $ n_in_{s_i} $ and then map these values to a bit stream with length $ n_i $. Next, we watermark this extracted bit stream into the original signal using a key. {To train the LSTM, we use the original signal $ y_i $ and a pseudo noise key $ p_i $ as input stream and the watermarked signal $ w_i $ as output stream.} Fig. \ref{fig:training} shows the training phase. Next, we illustrate how we use the trained LSTM to dynamically watermark an IoT signal. 

\subsubsection{LSTM for Dynamic Signal Authentication at the gateway}
At the gateway, we use a dynamic watermarking LSTM (DW-LSTM) for bit extraction. {To train this DW-LSTM, we use the watermarked signal $ w_i $ and key $ p_i $ as inputs to the neural network and the features of the original signal and extracted bit stream as outputs.} The block diagram model of the training phase at the gateway for our DW-LSTM is shown in Fig. \ref{fig:trainingcpu}. Using the DW-LSTM block of Fig. \ref{fig:training} at the IoTD and the DW-LSTM block of Fig. \ref{fig:trainingcpu} at the gateway, we propose a dynamic LSTM watermarking scheme to implement an attack detector at the gateway. 

In this method, a predefined bit stream is not used since a bit stream is dynamically generated inside the LSTM blocks at the IoTD and the gateway. This dynamic bit stream generation at the hidden layers of LSTMs solves the {dynamic data injection} attack problem, since recording and summing the IoTD signals will not increase the power ratio of the key sequence to the signal and the attacker will not be able to extract the key and bit stream. Using this method, the IoTD inserts the generated signal and key in its LSTM block in each window of $ n_in_{s_i} $ samples and produces a watermarked signal with different bit stream $ s_i $ in each window. At the gateway, the received watermarked signal and key are passed from the LSTM. Then, the two outputs (the extracted bits, and extracted features) are compared. In case of dissimilarity between two sequences, an attack alarm is triggered. Fig. \ref{fig:dynamicwatermarking} shows the block diagram if the proposed dynamic LSTM watermarking for attack detection. {In addition, note that the computationally expensive phase in this algorithm is actually the training phase where several iterations are done during gradient descent algorithm to find the optimal weights for the neural network \cite{chen2017machine}. However, the testing phase is actually a series of summations, multiplications, and activation functions which are not computationally expensive and can be run on any device \cite{chen2017machine}.	The training phase of our proposed deep learning dynamic watermarking can be done offline	(e.g. on high performance computers), as already adopted in the literature \cite{chen2017machine}. Hence, for every IoTD, we train an LSTM network offline and then implement at the corresponding IoTD. After implementation, the IoTDs will not need to train their associated LSTM network online and they	will operate in the testing phase.\footnote{The training phase can be done online as well by using some recent advances in LSTM architectures.}} Also, the proposed algorithm enables the gateway to authenticate any IoTD signal with a delay of $ d $. To analyze the required computational resource at the gateway, next, we derive the computational complexity of the proposed watermarking algorithm. 

\begin{proposition}\label{Proposition:complexity}
	The complexity of the proposed signal authentication method at the gateway is bounded by $ \mathcal{O}(d{f_i^s}) $.
\end{proposition}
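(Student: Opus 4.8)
The plan is to count the elementary operations the gateway performs to authenticate one window of the received stream and then substitute the delay constraint \eqref{eq3} to rewrite the bound in terms of $d$ and $f_i^s$. First I would fix the unit of work: per authentication decision, the gateway operates on exactly one window of $n_i n_{s_i}$ samples, since it must wait for that many samples to extract a full bit stream. For such a window it feeds the received watermarked signal $w_i$ together with the key $p_i$ into the trained DW-LSTM, reads off the extracted feature vector and bit stream, and compares them against the expected outputs to trigger (or not) an alarm. As the text notes, the testing phase of the LSTM is merely a sequence of summations, multiplications, and activation-function evaluations, so the dominant cost is the forward pass, and I would bound that.

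Second, I would bound the forward pass. For a trained LSTM of \emph{fixed} architecture (fixed hidden dimension, input dimension, and number of layers, none of which depend on the input length), each of the forget, input, and output gates at a single time step requires only a constant number of matrix--vector products and element-wise activations whose sizes are independent of the sequence length. Hence processing a length-$T$ input costs $\Theta(T)$ elementary operations, with the architecture-dependent factor absorbed into the constant. Taking $T = n_i n_{s_i}$, the number of samples in one window, the forward pass is $\mathcal{O}(n_i n_{s_i})$.

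Third, I would verify that the auxiliary steps do not dominate. The bit-stream comparison touches at most $n_{s_i}$ bits and the feature comparison only a constant number of scalars; both are $\mathcal{O}(n_i n_{s_i})$, so the per-window authentication complexity is $\mathcal{O}(n_i n_{s_i})$. Finally, invoking the delay constraint \eqref{eq3}, namely $n_{s_i} \le d f_i^s / n_i$, yields $n_i n_{s_i} \le d f_i^s$, and therefore the authentication complexity at the gateway is $\mathcal{O}(d f_i^s)$, as claimed.

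The main obstacle I anticipate is the second step: justifying that the inference cost is genuinely linear in the window length with the network size treated as a constant. This hinges on fixing the LSTM architecture independently of $n_i$ and $n_{s_i}$ and charging each gate its constant per-step cost, rather than allowing the hidden dimension to grow with the input. Once the per-sample cost is pinned to a constant, combining the three steps and substituting \eqref{eq3} is immediate.
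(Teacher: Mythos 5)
Your proof is correct, and its overall skeleton matches the paper's: bound the per-window authentication cost by $\mathcal{O}(n_i n_{s_i})$ and then invoke the delay constraint \eqref{eq3}, $n_{s_i}\leq d f_i^s/n_i$, to conclude $\mathcal{O}(d f_i^s)$. Where you genuinely differ is in how the per-window work is charged. The paper counts the elementary arithmetic of the correlation-based extraction in \eqref{correlation}: each bit costs $n_i$ multiplications and $n_i-1$ additions, hence $\mathcal{O}(n_i)$ per bit and $\mathcal{O}(n_i n_{s_i})$ for the $n_{s_i}$ bits of the stream. You instead bound the forward pass of the trained DW-LSTM, arguing that a fixed architecture (hidden dimension, depth, gate sizes independent of $n_i$ and $n_{s_i}$) gives constant cost per sample and hence linear cost over a window of $n_i n_{s_i}$ samples. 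Your accounting is arguably more faithful to the dynamic LSTM scheme that the proposition nominally concerns, but it rests on the extra modeling assumption you yourself flag (architecture size treated as a constant), which the paper sidesteps entirely by charging only the inner-product arithmetic of \eqref{correlation}; conversely, the paper's count is more elementary and assumption-free, but it implicitly treats bit extraction as the dominant cost and leaves the LSTM inference cost to an informal remark elsewhere in the text. Both routes yield the same $\mathcal{O}(n_i n_{s_i})$ intermediate bound, and the concluding substitution of \eqref{eq3} is identical.
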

\begin{proof}
	From \eqref{correlation}, we know that, in order to extract each bit in stream $ s $, we need $ n $ multiplications and $ n-1 $ summations. Therefore, the complexity of extracting one bit is $ \mathcal{O}(n_i) $ and extracting all of the bits in $ s $ will have a complexity of $ \mathcal{O}(n_in_{s_i}) $. Moreover, from Theorem \ref{Theorem:hyperparameters}, we know that $ n_i $ is chosen based on the watermarking performance criteria, while $ n_{s_i} $ is bounded by $ \frac{d{f_i^s}}{n_i} $. Thus, the complexity of signal authentication at the gateway is bounded by $ \mathcal{O}\left(\frac{d{f_i^s}}{n_i}n_i\right)= \mathcal{O}\left({d{f_i^s}}\right) $.
\end{proof}
\begin{figure}
	\centering
	\includegraphics[width=\columnwidth]{./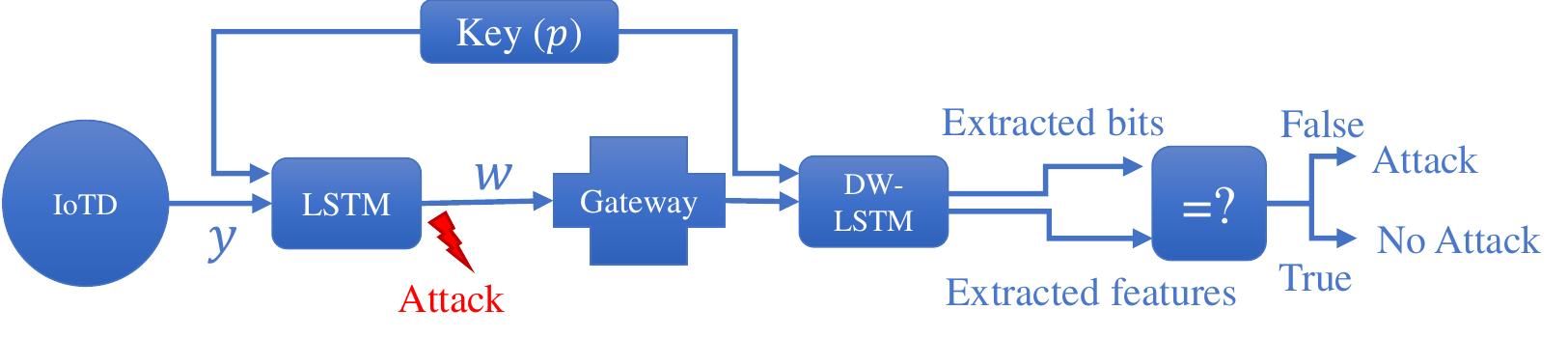}
	\vspace{-4mm}
	\caption{\small Dynamic watermarking for attack detection.}
	\label{fig:dynamicwatermarking}
	\vspace{-6mm}
\end{figure}
Proposition \ref{Proposition:complexity} shows that the sampling rate of an IoTD's signal directly affects the complexity of the authentication process. However, as previously mentioned, an IoTD with higher sampling and packet transmission rates to the gateway will be more valuable for the IoT system\cite{2017Zhou}. Thus, under limited computational resources, \emph{the gateway can only authenticate a limited number of IoTD signals} using our proposed method and, thus, it has to choose which subset of $\mathcal{N}$ it can authenticate. This computational limitation provides an opportunity for the attacker to choose the IoTDs with unauthenticated signals to attack and stay undetected. We assume that the amount of computations that can be done at the gateway is bounded by $\mathcal{O}(C) $ which means that the total authentication complexity of all the received signals cannot exceed $ \mathcal{O}(C) $:
\begin{align}\label{eq:complexity}
\mathcal{O}\left(d\sum_{i\in\mathcal{S}}{f_i^s}\right)\leq \mathcal{O}(C),
\end{align}
where $ \mathcal{S} \subseteq \mathcal{N} $ is the set of IoTDs whose signals will be authenticated by the gateway and $ {{f_i^s}} $ is the sampling rate of IoTD $ i $. Since the arguments on both sides of \eqref{eq:complexity} are linear, we have:
\begin{align}\label{eq:resources1}
\sum_{i\in\mathcal{S}}{f_i^s}\leq \frac{C}{d}.
\end{align}
Since the sampling rate $ f_s $ of each IoTD corresponds to its value for the gateway, $ v_i $, by considering a value proportional\footnote{Without loss of generality, our approach can accommodate any other relationship between the IoTD's value and sampling frequency.} to each IoTD's sampling frequency, $ v_i=\frac{{f_i^s}}{\sum_{i=1}^N{f_i^s}} $, we can rewrite \eqref{eq:resources1} as:
\begin{align}\label{eq:resources}
\sum_{i\in\mathcal{S}}{v_i}\leq \frac{C}{d}\sum_{i=1}^N{f_i^s}\triangleq R,
\end{align}
 Thus, the gateway must choose a set $\mathcal{S} \subseteq \mathcal{N}$ that satisfies \eqref{eq:resources}. In addition, the attacker has a limitation on the number of IoTDs that it can attack simultaneously due to its limited available resources. We capture this resource limitation by assuming that the attacker has only $ K $ devices that it can use to {dynamically inject data to IoTD signals}. Thus, the attacker must choose $ K $ target IoTDs, that are not $ \mathcal{S} $, while the gateway must predict the IoTDs that the attacker will target and, then, it will authenticate them. Since the attacker's and the gateway's actions are \emph{interdependent}, the outcome of their decisions requires analyzing their interaction. In the following, we address this interaction between the gateway and the attacker in the large-scale IoT system using a game-theoretic approach\cite{bacsar1998dynamic}.
\section{Game Theory for Authentication under Computational Constraints}\label{sect:largescale}
In the considered massive IoT system, we assume that IoTDs that have more data to send will be more valuable for the system since more important applications require more frequent monitoring and control \cite{2017Zhou}. However, IoTDs with more valuable data are also more likely to be selected as an attack target by the adversary. As discussed in Section \ref{sect:sensor}, IoTD signal authentication using our proposed technique requires computational resources at the gateway to process the received data from all IoTDs. Therefore, the gateway must optimally predict the vulnerable IoTDs while the attacker must predict which unauthenticated IoTDs to target so as to maximize the disruption in the IoT system. We analyze this problem using game theory. 
\subsection{Game Formulation}
To model the interdependent decision making processes of the attacker and the gateway, we introduce a noncooperative game $ \left\{\mathcal{P},\mathcal{Q}^j,u^j,K,C\right\} $ defined by five components: a) the \emph{players} which are the attacker $ a $ and the gateway $ {g} $ in the set $ \mathcal{P}\triangleq \left\{a,g\right\} $, b) the \emph{strategy} spaces $ \mathcal{Q}^j $ for each player $ j\in \mathcal{P} $, c) a \emph{utility function}, $ u^j $ for each player, d) the number of attacker's devices, $ K $, which can record IoT signals and e) the available computational resources for the gateway, $ C $. {Note that the gateway can get the information about $ K $ using the typical attackers' capabilities, past attacks, or known data on	similar attacks. Moreover, the attacker can extract information about the computational capability of gateways from data sheets and documentations, particularly for standard IoT systems that typically deploy known types of hardware and software.} For the gateway, the set of \emph{pure} strategies $ \mathcal{Q}^{g} $ corresponds to different feasible IoTD subsets whose signals can be authenticated without exceeding the available computational resources:
\begin{align}
	\mathcal{Q}^{g}=\left\{\mathcal{S}\subset\mathcal{N}\Big|\sum_{i\in\mathcal{S}}v_i\leq R\right\}.
\end{align}
On the other hand, for the attacker the set of pure strategies $ \mathcal{Q}^a $ is a set of $K$ IoTDs that will be targeted by an attack: $
	\mathcal{Q}^a=\left\{\mathcal{K}\subset\mathcal{N}\Big||\mathcal{K}|\leq K \right\}.
$
Moreover, the utility function\footnote{{Here, we use $ v_i=\frac{{f_i^s}}{\sum_{i=1}^N{f_i^s}} $ and the constraint in \eqref{eq:resources} between $ R $ and $ v_i $. The following game model and calculations hold true for any other relationship between $ R $ and IoTD values, $ v_i $. }} of each player can be written as follows:
\begin{equation}
\begin{aligned}
	u^{g}\left(\mathcal{S},\mathcal{K}\right)&=
	\sum_{i=1}^{N}v_i-\sum_{i\in\mathcal{K},i\notin\mathcal{S}}v_i=1-\sum_{i\in\mathcal{K},i\notin\mathcal{S}}v_i,\\
	u^a\left(\mathcal{K},\mathcal{S}\right)&=\sum_{i\in\mathcal{K},i\notin\mathcal{S}}v_i,
\end{aligned}
\end{equation}
where $ v_i $ is IoTD $ i $'s value. These utility functions essentially capture the fraction of secured IoTD signals for the defender, and the fraction of IoTDs that the attacker can compromise while remaining undetected. Therefore, the attacker seeks to maximize the fraction of compromised IoTDs while the defender seeks to minimize it. This coupling in the players strategies and utilities naturally leads to a game-theoretic situation \cite{Bacci2016}.
One of the most important solution concepts for noncooperative games is that of a \emph{Nash equilibrium} (NE). The NE characterizes a state at which no player $ j $ can improve its utility by changing its own strategy, given the strategy of the other player is fixed. For a noncooperative game, the NE in pure (deterministic) strategies can be defined as follows:
\begin{definition}
	A \emph{pure-strategy Nash equilibrium} of a noncooperative game is a vector of strategies $ \left[\mathcal{X}^{g^*},\mathcal{X}^{a^*}\right] \in \mathcal{Q}^{g}\times\mathcal{Q}^{a} $ such that $ \forall j \in \mathcal{P} $, the following holds true:
	$
		u^j\left(\mathcal{X}^{j^*},\mathcal{X}^{-j^*}\right)\geq u^j\left(\mathcal{X}^j,\mathcal{X}^{-j^*}\right), \,\, \forall \mathcal{X}^j\in\mathcal{Q}^j,
	$ 
	where $ -j $ is the identifier for $ j $'s opponent.
\end{definition}
The NE characterizes a stable game state at which the gateway cannot improve the protection of IoTD signals by \emph{unilaterally} changing its action $ \mathcal{S} $ given that the action of the attacker is fixed. Moreover, at the NE, the attacker cannot manipulate more IoTD signals by changing its action $ \mathcal{K} $ while the gateway keeps its action $ \mathcal{S} $ fixed. Before analyzing the NE of our game, we introduce the useful concept of a \emph{dominated strategy} that we will use in our further analysis.
\begin{definition}
	Player $ j $'s strategy $ \mathcal{X}^j $ is \emph{weakly dominated} if there exists another strategy $ \tilde{\mathcal{X}}^j\subset \mathcal{Q}^j $ such that:
	$
		u^j(\mathcal{X}^j,\mathcal{X}^{-j})\leq u^{j}(\tilde{\mathcal{X}}^j, \mathcal{X}^{-j}),\,\, \forall \, {\mathcal{X}^{-}\in \mathcal{Q}^{-j}},
	$
	with the strict inequality for at least one $  \mathcal{X}^{-j} $.
	In case the above inequality holds strictly for all $  \mathcal{X}^{-j}\in\mathcal{Q}^{-j} $, $  \mathcal{X}^j $ is said to be \emph{strictly dominated} (by $ \tilde{ \mathcal{X}}^j $).
\end{definition}
Essentially, a strategy is dominated if choosing it always yields a smaller utility compared to any other strategy, given all possible strategies for other players. Using this definition, prior to finding the NE, we next derive the dominated strategies of the gateway and the attacker.
\begin{proposition}\label{Proposition:domgateway}
	The gateway must choose as many IoTDs as possible to authenticate and, thus, any strategy $ \mathcal{S} \in \mathcal{Q}^{g} $ is weakly dominated by $ \tilde{\mathcal{S}}  \in \mathcal{Q}^{g} $ if $ \mathcal{S}\subseteq \tilde{\mathcal{S}} $.
\end{proposition}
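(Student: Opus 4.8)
The plan is to show directly from the definition of weak domination that enlarging the authenticated set can never hurt the gateway and strictly helps against at least one attacker action. I would take an arbitrary gateway strategy $\mathcal{S}\in\mathcal{Q}^{g}$ and an arbitrary $\tilde{\mathcal{S}}\in\mathcal{Q}^{g}$ with $\mathcal{S}\subseteq\tilde{\mathcal{S}}$, and compare the gateway's utility against a fixed but arbitrary attacker action $\mathcal{K}\in\mathcal{Q}^{a}$. Recalling that
\begin{align}
	u^{g}(\mathcal{S},\mathcal{K})=1-\sum_{i\in\mathcal{K},\,i\notin\mathcal{S}}v_i,\nonumber
\end{align}
the comparison reduces to comparing the two penalty sums $\sum_{i\in\mathcal{K},\,i\notin\mathcal{S}}v_i$ and $\sum_{i\in\mathcal{K},\,i\notin\tilde{\mathcal{S}}}v_i$. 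Because $\mathcal{S}\subseteq\tilde{\mathcal{S}}$, the complement satisfies $\mathcal{N}\setminus\tilde{\mathcal{S}}\subseteq\mathcal{N}\setminus\mathcal{S}$, so the index set $\{i\in\mathcal{K}:i\notin\tilde{\mathcal{S}}\}$ is a subset of $\{i\in\mathcal{K}:i\notin\mathcal{S}\}$. Since every $v_i\ge 0$, dropping terms cannot increase the sum, hence the penalty under $\tilde{\mathcal{S}}$ is no larger than the penalty under $\mathcal{S}$, which gives $u^{g}(\mathcal{S},\mathcal{K})\le u^{g}(\tilde{\mathcal{S}},\mathcal{K})$ for every $\mathcal{K}$. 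This establishes the weak-domination inequality uniformly over the attacker's strategy space.

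The remaining step is to verify the strictness requirement in the definition, namely that the inequality is strict for at least one attacker action $\mathcal{K}$. For this I would assume the proper-containment case $\mathcal{S}\subsetneq\tilde{\mathcal{S}}$ (the degenerate case $\mathcal{S}=\tilde{\mathcal{S}}$ giving only weak domination trivially), pick some $i_0\in\tilde{\mathcal{S}}\setminus\mathcal{S}$, and exhibit an attacker action that targets $i_0$, for instance $\mathcal{K}=\{i_0\}$ (feasible whenever $K\ge 1$). For this choice the penalty under $\tilde{\mathcal{S}}$ is zero while the penalty under $\mathcal{S}$ equals $v_{i_0}>0$, so $u^{g}(\mathcal{S},\mathcal{K})=1-v_{i_0}<1=u^{g}(\tilde{\mathcal{S}},\mathcal{K})$, yielding the required strict inequality for at least one opponent action and completing the proof of weak domination.

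The main obstacle is not the algebra but making sure the strictness argument is airtight under the model's conventions: I need $v_{i_0}>0$ (which follows from $v_i=f_i^s/\sum_j f_j^s$ with positive sampling rates) and I need the single-target action $\{i_0\}$ to be an admissible attacker strategy, which holds because $\mathcal{Q}^{a}$ consists of all subsets of cardinality at most $K$ with $K\ge 1$. A secondary subtlety is the feasibility of $\tilde{\mathcal{S}}$ itself: the statement quantifies over $\tilde{\mathcal{S}}\in\mathcal{Q}^{g}$, so the resource constraint $\sum_{i\in\tilde{\mathcal{S}}}v_i\le R$ is assumed to hold and need not be re-derived; one only has to read the proposition as asserting domination among feasible supersets, which is exactly how it is stated. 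Once these modeling points are noted, the monotonicity of the penalty sum in the authenticated set delivers the result immediately.
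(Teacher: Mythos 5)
Your proof is correct, and its core is the same as the paper's: both arguments reduce the comparison to the set inclusion $\left\{i\,\middle|\,i\in\mathcal{K},\,i\notin\tilde{\mathcal{S}}\right\}\subseteq\left\{i\,\middle|\,i\in\mathcal{K},\,i\notin\mathcal{S}\right\}$ induced by $\mathcal{S}\subseteq\tilde{\mathcal{S}}$, and then use nonnegativity of the $v_i$ to conclude $u^{g}(\mathcal{S},\mathcal{K})\le u^{g}(\tilde{\mathcal{S}},\mathcal{K})$ for every attacker action $\mathcal{K}$. Where you go beyond the paper is the strictness clause: the paper's own definition of weak domination requires the inequality to be strict for at least one opponent strategy, yet its proof stops at the uniform weak inequality and never exhibits such a strategy. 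Your construction---take $i_0\in\tilde{\mathcal{S}}\setminus\mathcal{S}$ in the proper-containment case and let the attacker play $\mathcal{K}=\{i_0\}$, which is feasible since $K\ge 1$ and gives $u^{g}(\mathcal{S},\{i_0\})=1-v_{i_0}<1=u^{g}(\tilde{\mathcal{S}},\{i_0\})$ because $v_{i_0}>0$---closes exactly that gap, and your handling of the degenerate case $\mathcal{S}=\tilde{\mathcal{S}}$ is the right reading of the proposition. Relative to the paper's stated definition, your write-up is the more complete of the two.
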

\begin{proof}
	From set theory we have:
	\begin{align}
		\mathcal{S}\subseteq \tilde{\mathcal{S}},
	&\Rightarrow	\mathcal{K}\cap\mathcal{S}\subseteq \mathcal{K}\cap \tilde{\mathcal{S}},
		\Rightarrow	\mathcal{K}-\left(\mathcal{K}\cap \tilde{\mathcal{S}}\right)\subseteq \mathcal{K}-\left(\mathcal{K}\cap {\mathcal{S}}\right),\nonumber\\
		&\Rightarrow	\left\{i\Big|i\in\mathcal{K},i\notin\tilde{\mathcal{S}}\right\}\subseteq \left\{i\Big|i\in\mathcal{K},i\notin\mathcal{S}\right\}.
	\end{align}
	Therefore, we have:
	\begin{align}
		\sum_{i\in\mathcal{K},i\notin\tilde{\mathcal{S}}}v_i&\leq \sum_{i\in\mathcal{K},i\notin\mathcal{S}}v_i,\nonumber\\
		u^{g}\left(\mathcal{S},\mathcal{K}\right)=
		1-\sum_{i\in\mathcal{K},i\notin\mathcal{S}}v_i&\leq	1-\sum_{i\in\mathcal{K},i\notin\tilde{\mathcal{S}}}v_i=u^{g}\left(\tilde{\mathcal{S}},\mathcal{K}\right),
	\end{align}
	which proves that $ \mathcal{S} $ is weakly dominated by $ \tilde{\mathcal{S}} $.
\end{proof}
Proposition \ref{Proposition:domgateway} shows that the gateway must use all of its available computational resource to authenticate the IoTD signals since any strategy that is a subset of another strategy uses less computational resources.
\begin{proposition}\label{Proposition:domAtt}
	The attacker must use all of its $ K $ recording devices, i.e., any strategy $ \mathcal{K} \in \mathcal{Q}^a $ is weakly dominated by $ \tilde{\mathcal{K}}  \in \mathcal{Q}^a $ if $ \mathcal{K}\subseteq \tilde{\mathcal{K}} $ and $ \left|\tilde{\mathcal{K}}\right|=K $.
\end{proposition}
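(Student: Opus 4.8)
The plan is to mirror the argument used for Proposition \ref{Proposition:domgateway}, exchanging the roles of the two players and exploiting the fact that the attacker's utility $u^a(\mathcal{K},\mathcal{S})=\sum_{i\in\mathcal{K},i\notin\mathcal{S}}v_i$ only accumulates nonnegative contributions $v_i$ from IoTDs that are simultaneously targeted and unauthenticated. Enlarging the target set $\mathcal{K}$ can therefore never decrease this sum, which is precisely the monotonicity that underlies weak domination here.

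First I would start from the hypothesis $\mathcal{K}\subseteq\tilde{\mathcal{K}}$ and propagate it through an intersection and complement with the gateway's fixed set $\mathcal{S}$, exactly as in the set-theoretic chain of Proposition \ref{Proposition:domgateway}. This yields the inclusion $\left\{i\,\middle|\,i\in\mathcal{K},\,i\notin\mathcal{S}\right\}\subseteq\left\{i\,\middle|\,i\in\tilde{\mathcal{K}},\,i\notin\mathcal{S}\right\}$, valid for every feasible $\mathcal{S}\in\mathcal{Q}^{g}$. Next I would invoke the nonnegativity of the values $v_i$ to convert this set inclusion into $\sum_{i\in\mathcal{K},i\notin\mathcal{S}}v_i\leq\sum_{i\in\tilde{\mathcal{K}},i\notin\mathcal{S}}v_i$, i.e., $u^a(\mathcal{K},\mathcal{S})\leq u^a(\tilde{\mathcal{K}},\mathcal{S})$ for all $\mathcal{S}$.

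To upgrade this to \emph{weak} domination I must also exhibit at least one $\mathcal{S}$ for which the inequality is strict. Since $\left|\tilde{\mathcal{K}}\right|=K\geq|\mathcal{K}|$ together with $\mathcal{K}\subseteq\tilde{\mathcal{K}}$ forces $\mathcal{K}\subsetneq\tilde{\mathcal{K}}$ whenever $|\mathcal{K}|<K$, there is some $i_0\in\tilde{\mathcal{K}}\setminus\mathcal{K}$; choosing any feasible $\mathcal{S}$ that does not authenticate $i_0$ makes the strictly positive value $v_{i_0}$ contribute to the larger sum only, giving the strict gap required by the definition of weak domination.

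I expect no genuine obstacle, as this statement is the dual of Proposition \ref{Proposition:domgateway}. The only point requiring a moment of care is the strictness clause: I must confirm that a feasible gateway strategy leaving $i_0$ unauthenticated exists, which holds trivially because the empty set, and indeed any subset of $\mathcal{N}$ omitting $i_0$, satisfies the resource constraint $\sum_{i\in\mathcal{S}}v_i\leq R$. Hence $\mathcal{K}$ is weakly dominated by $\tilde{\mathcal{K}}$, and since $\tilde{\mathcal{K}}$ can always be taken with $\left|\tilde{\mathcal{K}}\right|=K$, the attacker optimally commits all $K$ recording devices.
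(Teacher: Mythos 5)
Your proposal is correct and follows essentially the same route as the paper's own proof: the set-theoretic inclusion $\left\{i\,\middle|\,i\in\mathcal{K},\,i\notin\mathcal{S}\right\}\subseteq\left\{i\,\middle|\,i\in\tilde{\mathcal{K}},\,i\notin\mathcal{S}\right\}$ followed by summation of the nonnegative values $v_i$, mirroring Proposition \ref{Proposition:domgateway}. In fact you are slightly more careful than the paper, since you explicitly verify the strictness clause required by the definition of weak domination (exhibiting a feasible $\mathcal{S}$ omitting some $i_0\in\tilde{\mathcal{K}}\setminus\mathcal{K}$), a step the paper's proof leaves implicit.
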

\begin{proof}
	We know that  $ \mathcal{K}\subseteq \tilde{\mathcal{K}} $, and, thus, the number of IoTDs that are attacked by the adversary when choosing strategy $ \tilde{\mathcal{K}} $ is higher than the number of attacked IoTDs when choosing strategy $ \mathcal{K} $. Therefore, if the gateway chooses any strategy $ \mathcal{S} $, then the number of unauthenticated IoTD devices for the first case will be greater than or equal to the latter case or:
	\begin{align}
		\left\{i\Big|i\in{\mathcal{K}},i\notin{\mathcal{S}}\right\}&\subseteq \left\{i\Big|i\in\tilde{\mathcal{K}},i\notin\mathcal{S}\right\},
	\end{align}
	and, thus, we have:
	\begin{align}
		u^a\left(\mathcal{K},\mathcal{S}\right)=
		\sum_{i\in\mathcal{K},i\notin\mathcal{S}}v_i&\leq	\sum_{i\in\tilde{\mathcal{K}},i\notin{\mathcal{S}}}v_i=u^a\left(\tilde{\mathcal{K}},\mathcal{S}\right).
	\end{align}
	Since the attacker cannot choose more than $ K $ IoTDs to attack, thus any strategy $ \tilde{\mathcal{K}} $ with $ K $ members, $ \left|\tilde{\mathcal{K}}\right|=K $, dominates all the strategies that are $ \mathcal{K}\subseteq\tilde{\mathcal{K}} $.
\end{proof}

Proposition \ref{Proposition:domAtt} shows that non-dominated strategies for the attacker are those that include $ K $ IoTDs to attack. Here, we define $ \tilde{\mathcal{Q}}^j\subseteq \mathcal{Q}^j $ as the set of player $ j $'s non-dominated strategies. The attacker's non-dominated strategies can be interpreted as combination of $ K $ IoTDs from all $ N $ IoTDs. Therefore, the number attacker's strategies is $ {{N}\choose{K}} $. Since the number of attacker's recording devices, $ K $, is comparably less than the number of IoTDs $ N $, we can easily see that the complexity of finding the attacker's non-dominated strategy set is $ \mathcal{O}\left(N^K\right) $ which is comparably smaller than considering all the subsets of $ \mathcal{N} $, that results in a complexity of $ \mathcal{O}(2^N) $. Thus, Proposition \ref{Proposition:domAtt} reduces the complexity of the attacker's game significantly.

\begin{algorithm}[t]
	\caption{Dynamic Programming for finding the gateway's Non-dominated Strategies}
	\begin{algorithmic}[1]\footnotesize 
		\State \textbf{Input} $ \mathcal{F}\triangleq\left\{{f^s_1},\dots,{f^s_N}\right\} $, $ C/d $,
		\State \textbf{Initialize} $\boldsymbol{M}_{N+1\times C/d-{f^s_1}+1}$ everywhere False apart from $ \boldsymbol{M}[0,0]= $ True, 
		%\State Initialize $\boldsymbol{M}'_{N+1\times R+1}$ everywhere False,
		\State Start filling all the entities of matrix $ \boldsymbol{ M} $:
		\State \textbf{for} $i\leftarrow 1$ \emph{to} $N$ \textbf{do}
		\State \quad  \textbf{for} $j\leftarrow {f^s_1}$ \emph{to} $C/d$ \textbf{do}
		\State \quad \quad $\boldsymbol{ M}[i,j]=\boldsymbol{ M}[i-1,j]\bigvee \boldsymbol{ M}[i-1,j-{f_i^s}], $ (True, if there is a subset with the values less than $ {f_i^s} $ that sum up to $ j $.)
		\State \textbf{for} $k\leftarrow C/d$ \emph{to} ${f^s_1}$ \textbf{do}
		\State \quad \textbf{if} $ \max_{m}\boldsymbol{ M}[m,k-{f_i^s}] $ \textbf{do} (Checks the IoTD with highest $ {f^s} $ such that the IoTDs with less $ {f^s} $ can sum up to $ k $.)
		\State \quad \quad \textbf{Initialize} $ \mathcal{R}=\left\{\right\} ,$
		\State \quad \quad $ \mathcal{S}\leftarrow $ \textbf{RecPath}$ (m,k,\mathcal{F},\boldsymbol{ M},\mathcal{R}) $: (This function follows the path in $ \boldsymbol{ M} $ until reaching to the first column.)
		\State \quad \quad \quad \quad \quad \textbf{if} $ \boldsymbol{ M}[m,k] $ \textbf{do}
		\State \quad \quad \quad \quad \quad \quad $ \mathcal{B}\leftarrow \mathcal{R}, $
		\State \quad \quad \quad \quad \quad \quad  \textbf{RecPath}$ (m-1,k,\mathcal{F},\boldsymbol{ M},\mathcal{B}) $
		\State \quad \quad \quad \quad \quad \textbf{if} $ k \geq {f^s}_m \bigwedge \boldsymbol{ M}[m,k-{f^s}_m] $ \textbf{do}
		\State \quad \quad \quad \quad \quad \quad $ \mathcal{R}\leftarrow {f_i^s} ,$
		\State \quad \quad \quad \quad \quad \quad  \textbf{RecPath}$ (m-1,k,\mathcal{F},\boldsymbol{ M},\mathcal{R}) $
		\State \quad \quad \quad \quad \textbf{Output} $ \mathcal{R} $
		\State \textbf{Output} $ \mathcal{S}$
	\end{algorithmic}
	\label{Algorithm:SubsetSumProblem}
\end{algorithm}
For the gateway, finding all the strategies that satisfy the condition in Proposition \ref{Proposition:domgateway} is an NP-hard problem \cite{Pisinger2003}. Thus, in Algorithm~\ref{Algorithm:SubsetSumProblem} we propose a novel dynamic programing approach to reduce the complexity of finding the non-dominated strategies of the gateway. The algorithm takes the set of all IoTD sampling frequencies and $ C/d $ as input. In this algorithm, we first define a $( N+1 )\times (C/d-f^s_1+1) $, matrix $ \boldsymbol{ M} $ whose element $ \boldsymbol{ M}[i,j] $ is set to a ``True'' value if there is a subset of IoTDs with a sampling frequency less than $ {f_i^s} $ such that the summation of all IoTD sampling frequencies in this subset equals $ j $. 
\begin{figure}[!t]
	\centering
	\includegraphics[width=\columnwidth]{./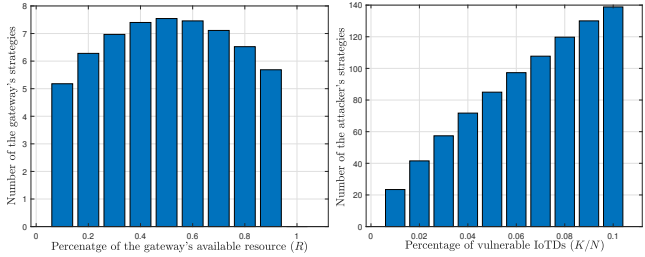}
	\caption{\small Number of the gateway's and the attacker's strategies for 1000 IoTDs (logarithmic scale).}
	\label{fig:strategies}
	\vspace{-1mm}
\end{figure}
\begin{corollary}
	At each stage of computation, the Algorithm \ref{Algorithm:SubsetSumProblem} uses the solutions of previous subproblems and since the operations used to fill each entity of $ \boldsymbol{ M} $ are similar, the complexity of finding the defender's non-dominated strategies reduces from $ \mathcal{O}(2^N) $ to $ \mathcal{O}(N C/d) $.
\end{corollary}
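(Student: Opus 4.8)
The plan is to read Algorithm~\ref{Algorithm:SubsetSumProblem} as a standard subset-sum dynamic program and to bound the cost of populating the table $\boldsymbol{M}$. First I would establish the \emph{invariant} underlying correctness: entry $\boldsymbol{M}[i,j]$ is \textbf{True} precisely when some subset of the first $i$ IoTDs has sampling frequencies summing to exactly $j$. This follows by induction on $i$, with base case $\boldsymbol{M}[0,0]=$ \textbf{True} (the empty set sums to zero) and inductive step supplied directly by the recurrence $\boldsymbol{M}[i,j]=\boldsymbol{M}[i-1,j]\vee\boldsymbol{M}[i-1,j-f_i^s]$, which encodes the dichotomy ``IoTD $i$ excluded'' (first disjunct) versus ``IoTD $i$ included'' (second disjunct). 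By Proposition~\ref{Proposition:domgateway}, the non-dominated strategies are exactly the maximal feasible subsets, so recovering them reduces to reading off the \textbf{True} cells with the largest achievable column value $j\le C/d$, which is the role of the \textbf{RecPath} routine.

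Next I would count the work. The table $\boldsymbol{M}$ has $N+1$ rows and $C/d-f_1^s+1$ columns, hence $\mathcal{O}(N\,C/d)$ cells in total. The crucial point, which is the ``overlapping subproblems'' observation stated in the corollary, is that each cell is filled by a single logical OR of two \emph{already-computed} entries from row $i-1$, so no recomputation ever occurs and each cell costs $\mathcal{O}(1)$. Multiplying the per-cell cost by the number of cells gives an $\mathcal{O}(N\,C/d)$ bound for populating the whole table. Since every feasible subset corresponds to a realizable column value, the table enumerates these sums \emph{implicitly} rather than iterating explicitly over the $2^N$ candidate subsets, which is exactly why the complexity drops from the brute-force $\mathcal{O}(2^N)$ to $\mathcal{O}(N\,C/d)$.

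The step requiring the most care, and the one I expect to be the main obstacle, is the boundary and indexing bookkeeping in the recurrence. When $j-f_i^s<0$ the second disjunct must be interpreted as \textbf{False}, and the stored column index carries an offset (the table width is $C/d-f_1^s+1$ while the initialization touches column $0$), so I would fix an explicit offset convention and verify that the induction and the stated table dimension line up cleanly under it; the $\Theta(N\,C/d)$ cell count is robust to these choices, so the complexity claim survives regardless. I would also note, for honesty, that the bound is \emph{pseudo-polynomial}: it is linear in the numeric magnitude $C/d$ rather than in its bit-length, consistent with the NP-hardness of the underlying subset-sum problem, and the improvement over $\mathcal{O}(2^N)$ is genuine only when $C/d$ is moderate.
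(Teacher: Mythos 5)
Your proposal is correct and follows essentially the same route as the paper: the paper states this corollary without a separate proof, its justification being exactly the counting argument you give --- the table $\boldsymbol{M}$ has $\mathcal{O}(N\,C/d)$ entries, each filled in constant time by reusing the two already-computed entries of the previous row, so the brute-force $\mathcal{O}(2^N)$ enumeration is replaced by a pseudo-polynomial bound. Your additions (the correctness invariant by induction, the boundary-index convention, and the honesty note that the bound is pseudo-polynomial, consistent with the NP-hardness cited in the paper) are sound elaborations of that same argument rather than a different approach.
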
 
Even though Algorithm ~\ref{Algorithm:SubsetSumProblem} reduces the complexity of finding the gateway's non-dominated strategies to a linear time, the number of these non-dominated strategies is dependent on the gateway's available resources and the distribution of sampling frequencies. Fig. \ref{fig:strategies} shows the number of non-dominated strategies for the gateway and the attacker, when $ N=1000 $ IoTDs, $ R\in(0,1]$, and $ K\in[0.01N,0.1N] $. In this massive IoT scenario, from Fig. \ref{fig:strategies}, we can see that, the number of strategies for both players is very large which consequently requires a complex process to find the pure-strategy NE. Moreover, even though we derived the non-dominated strategies for both players, the NE is not guaranteed to exist for our game\cite{bacsar1998dynamic}. For example, consider only three IoTDs with $ \left\{1000,2000,3000\right\} $ as their sampling frequency, assume that $ C/d=5000 $, and an attacker having 1 recording device. This example game, along with the non-dominated strategies of both players, is summarized in Table \ref{Table:game}. Any element $ (i,j) $ in Table \ref{Table:game} is the outcome of playing the gateway's $ i $-th and the attacker's $ j $-th strategy. From Table \ref{Table:game}, we observe that, for any outcome of the game, at least one of the players can change its strategy to gain a better payoff. Therefore, this game cannot have a pure-strategy NE for a general case. Thus, we investigate the NE in \emph{mixed strategies} which is guaranteed to exist for finite noncooperative games\cite{bacsar1998dynamic}. When using mixed strategies, each player will assign a probability for playing each one of its pure strategies. For a massive IoT, the use of mixed strategies is motivated by two facts: a) the gateway and the attacker must randomize over their strategies in order to make it nontrivial for the opponent to guess their potential action, and b) the procedure of choosing IoTDs can be repeated over an infinite time duration and mixed strategies can capture the frequency of choosing certain strategies for both players. Thus, next, we analyze our game's mixed-strategy NE.

\begin{table}[!t]
	\centering
	\begin{tabular}{ |c|c|c|c| } 
		\hline
		\diagbox{\textbf{Attacker}}{\textbf{gateway}}& $ \left\{1000,2000\right\} $ & $ \left\{1000,3000\right\} $ & $ \left\{2000,3000\right\} $  \\
		\hline
		$ \left\{1000\right\} $ & $ (0,1) $ & $ (0,1) $ & $ (1/6,5/6) $ \\ 
		\hline
		$ \left\{2000\right\} $ & $ (0,1) $ & $ (2/6,4/6) $ & $ (0,1) $ \\
		\hline
		$ \left\{3000\right\} $ & $ (3/6,3/6) $ & $ (0,1) $ & $ (0,1) $\\
		\hline
	\end{tabular}
	\caption{\small An example of strategies and utilities for the game between the gateway and the attacker.}
	\label{Table:game}
	\vspace{-6mm}
\end{table}
\subsection{Mixed-Strategy Nash Equilibrium}
In our game, by using mixed strategies, the attacker and defender will assign probabilities for playing each one of their non-dominated strategies\cite{bacsar1998dynamic}. Let $ \boldsymbol{p}^a $ be the vector of mixed strategies for the attacker where each element in $ \boldsymbol{p}^a $ is the probability of choosing a set of IoTDs, i.e., selecting one strategy from the attacker's strategy set $ \mathcal{K} $. Moreover, $ \boldsymbol{p}^{g} $ is the vector of mixed strategies for the gateway whose elements represent the probability of choosing a certain strategy from the gateway's strategy set, $ \mathcal{S} $. Consequently, each player must choose its own mixed-strategy to maximize its expected utility which is defined by:
\begin{align}
	U^j(\boldsymbol{p}^j,\boldsymbol{p}^{-j})=\sum_{\mathcal{S}\in\mathcal{Q}^{g}}\sum_{\mathcal{K}\in\mathcal{Q}^a}\boldsymbol{p}^{g}(\mathcal{S})\boldsymbol{p}^a(\mathcal{K})u^j(\mathcal{S},\mathcal{K}), \,\,\,\text{for }\forall j \in \mathcal{P}.
\end{align}
To solve this problem, we seek to find the mixed-strategy Nash equilibrium, defined as follows:
\begin{definition}
	A mixed strategy profile $ \boldsymbol{p}^* $ constitutes a \emph{mixed-strategy Nash equilibrium} (MSNE) if, for each player, $ j $, we have:
 $
		U^j(\boldsymbol{p}^{j^*},\boldsymbol{p}^{{-j}^*})\geq U^j(\boldsymbol{p}^{j},\boldsymbol{p}^{-j^*}), \,\,\, \forall \boldsymbol{p}^j\in \mathcal{P}^j,
$
	where $ \mathcal{P}^j $ is the set of all probability distributions for player $ j $ over its action space $ \mathcal{Q}^j $.
\end{definition}
The MSNE for our game implies a state at which the gateway has chosen its optimal randomization over authenticating the signals of its IoTDs and, therefore, cannot further improve the system security by changing this randomization. Similarly, for the attacker, an MSNE is a state at which the attacker has chosen its probability distribution over the selection of IoTDs that it will attack and, thus, cannot improve its expected utility by changing its choice. Since our game is a constant-sum two-player game, the von Neuman indifference principle can be used to find a closed-form solution for the MSNE \cite{bacsar1998dynamic}. Under this principle, at the MSNE, the expected utilities of the players with respect to the mixed strategies played by the opponent  must be equal, for every pure strategy choice. To derive the MSNE for our game, we first define an \emph{allocation} vector $ \hat{\boldsymbol{p}}^j_{N\times 1} $ for each player $ j $ such that each element $ i $ in this vector is the probability of choosing IoTD $ i $. The relationship between the allocation vector and the mixed-strategy of our game can be written as follows:
\begin{align}\label{eq:probs}
\hat{p}^{g}_i=\sum_{\mathcal{S}\in\mathcal{S}_i}p^{g}(\mathcal{S}),\,\, \hat{p}^a_i=\sum_{\mathcal{K}\in\mathcal{K}_i}p^a(\mathcal{K}),
\end{align}
where $ \hat{p}^j_i $ is element $i$ of vector  $ \hat{\boldsymbol{p}}^j$. We define $ \mathcal{K}_i=\left\{\mathcal{K}\in\tilde{ \mathcal{Q}}^a|i\in\mathcal{K}\right\} $ as the set of all attacker strategies that have IoTD $ i $, and $ \mathcal{S}_i=\left\{\mathcal{S}\in\tilde{\mathcal{Q}}^{g}|i\in\mathcal{S}\right\} $ as the set of all gateway strategies that have IoTD $ i $. We  next derive the mapping between the expected utility of each player by playing mixed-strategy vector $ \boldsymbol{p}^j $ and the allocation vector $ \hat{\boldsymbol{p}}^j $, then we prove that our game has infinitely many, MSNEs all of which achieve a unique expected value for both attacker and defender.
\begin{proposition}\label{Proposotion:mapping}
	To map the mixed strategy vectors to allocation vectors the following conditions must hold true:
	\begin{align}\label{eq:mapping}
		 \sum_{i=1}^{N}\hat{p}_i^a=K,\,\,\, \sum_{i=1}^{N}\alpha_i\hat{p}_i^{g}=D,
	\end{align}
	where $ D $ is the maximum number of IoTDs in a strategy $ \mathcal{S} \in \tilde{ \mathcal{Q}}^{g} $ and, $\forall i \in \mathcal{N}$, $ \alpha_i $ is an integer.
\end{proposition}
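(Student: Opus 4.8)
The plan is to establish both identities by the same device: interchanging the order of summation in the definitions \eqref{eq:probs} of the allocation vectors, so that a sum over devices of a sum over strategies becomes a sum over strategies weighted by how each device sits inside them, and then exploiting that $\boldsymbol{p}^a$ and $\boldsymbol{p}^g$ are probability distributions over the non-dominated strategy sets $\tilde{\mathcal{Q}}^a$ and $\tilde{\mathcal{Q}}^{g}$, so that their entries sum to one.

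For the attacker I would start from $\hat{p}^a_i=\sum_{\mathcal{K}\in\mathcal{K}_i}p^a(\mathcal{K})$ and sum over $i$. Swapping the two summations, each strategy $\mathcal{K}$ contributes its weight $p^a(\mathcal{K})$ once for every device it contains, i.e.\ exactly $\lvert\mathcal{K}\rvert$ times, so $\sum_{i}\hat{p}^a_i=\sum_{\mathcal{K}\in\tilde{\mathcal{Q}}^a}\lvert\mathcal{K}\rvert\,p^a(\mathcal{K})$. Here I would invoke Proposition \ref{Proposition:domAtt}, which guarantees that every non-dominated attacker strategy has exactly $K$ elements; hence $\lvert\mathcal{K}\rvert=K$ factors out and what remains is $K\sum_{\mathcal{K}}p^a(\mathcal{K})=K$, giving the first identity.

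For the gateway the same interchange yields $\sum_{i}\alpha_i\hat{p}^{g}_i=\sum_{\mathcal{S}\in\tilde{\mathcal{Q}}^{g}}\big(\sum_{i\in\mathcal{S}}\alpha_i\big)\,p^{g}(\mathcal{S})$, so the identity reduces to showing that the bracketed quantity equals one and the same constant $D$ for every non-dominated $\mathcal{S}$; then $D$ factors out and $\sum_{\mathcal{S}}p^{g}(\mathcal{S})=1$ closes the argument. The difference from the attacker case is that the gateway's non-dominated strategies are the maximal feasible subsets characterized by Proposition \ref{Proposition:domgateway} and Algorithm \ref{Algorithm:SubsetSumProblem}, and these need not have equal cardinality, so the naive choice $\alpha_i\equiv1$ only produces the expected cardinality $\sum_{\mathcal{S}}\lvert\mathcal{S}\rvert\,p^{g}(\mathcal{S})$, which is not constant in $p^{g}$. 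The role of the integer weights $\alpha_i$ is precisely to equalize the strategies: I would select integers with $\sum_{i\in\mathcal{S}}\alpha_i=D$ for every $\mathcal{S}\in\tilde{\mathcal{Q}}^{g}$, with $D$ the maximum number of devices in any such strategy, thereby upweighting the devices that appear in smaller strategies to compensate for their missing entries.

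I expect the existence of such integer weights to be the main obstacle, since $\{\sum_{i\in\mathcal{S}}\alpha_i=D\}_{\mathcal{S}\in\tilde{\mathcal{Q}}^{g}}$ is a linear system over the incidence structure of an antichain of maximal sets and is not automatically solvable. When all non-dominated gateway strategies share the common cardinality $D$ --- as happens in the three-device instance of Table \ref{Table:game} --- the choice $\alpha_i\equiv1$ works at once and the argument finishes immediately. In the general unequal-cardinality case the system need not be consistent, so I would either impose the mild structural assumption that all maximal feasible subsets share the cardinality $D$ (under which $\alpha_i\equiv1$ again suffices), or restrict to instances whose incidence structure admits integer weights; pinning down precisely when such weights exist is the delicate step, and the point where the structure of the feasible set, rather than mere double counting, must be used.
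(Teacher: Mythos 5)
Your argument is the same double-counting argument the paper uses, and on the attacker side it matches the paper's proof line for line: the paper sums $\hat{p}^a_i=\sum_{\mathcal{K}\in\mathcal{K}_i}p^a(\mathcal{K})$ over $i$, observes that every non-dominated strategy contains exactly $K$ IoTDs (Proposition \ref{Proposition:domAtt}), so that the interchange counts each strategy $K$ times, and concludes $\sum_{i=1}^{N}\hat{p}^a_i=K\sum_{\mathcal{K}\in\tilde{\mathcal{Q}}^a}p^a(\mathcal{K})=K$.

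On the gateway side, the paper performs exactly the reduction you describe, but it does \emph{not} resolve the obstacle you isolate: it simply declares $\alpha_i$ to be ``the number of times $\mathcal{S}_i$ shows up in $D$ repetitions of the set $\tilde{\mathcal{Q}}^{g}$,'' which is nothing more than postulating integer multiplicities satisfying the incidence system $\sum_{i\in\mathcal{S}}\alpha_i=D$ for every $\mathcal{S}\in\tilde{\mathcal{Q}}^{g}$; no argument is given that this system is consistent, let alone solvable in integers. Your caution is therefore not a defect of your proposal relative to the paper --- it is a gap in the paper's own proof, and it is a real one. For instance, take four IoTDs with values $v_1=v_2=\tfrac{1}{6}$, $v_3=v_4=\tfrac{1}{3}$ and $R=\tfrac{2}{3}$: the maximal feasible (hence non-dominated) gateway strategies are exactly $\{1,2,3\}$, $\{1,2,4\}$, and $\{3,4\}$, so $D=3$, and the system forces $\alpha_3=\alpha_4=\tfrac{3}{2}$, which is not an integer. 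So the proposition as stated cannot be proved by pure double counting for arbitrary knapsack instances, and your identification of this as the delicate step --- together with the observation that equal-cardinality strategy sets (as in Table \ref{Table:game}) make $\alpha_i\equiv 1$ work --- is more precise than what the paper provides.
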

\begin{proof}
	First, we analyze the mapping between attacker's allocation vector and mixed-strategy, $ \hat{p}_i $. From the definition of $ \mathcal{K}_i $, we have $ \sum_{\mathcal{K}\in\mathcal{K}_i}p^a(\mathcal{K})=\hat{p}^a_i$. Moreover, we know that the summation of all the attacker's non-dominated mixed strategies equals to $ 1 $, i.e., $ \sum_{\mathcal{K}\in\tilde{\mathcal{Q}}^a}p^a(\mathcal{K})=1 $. In addition, since every strategy $ \mathcal{K}\in \hat{\mathcal{Q}}^a $ has $ K $ IoTDs, then, if we make a set by $ K $ times repeating the set $ \tilde{\mathcal{Q}}^a $, we can build each $ \mathcal{K}_i $ from this set to obtain:
	\begin{align}
		\sum_{i=1}^{N}\hat{p}_i^a=K\sum_{\mathcal{K}\in\tilde{\mathcal{Q}}^a}p^a(\mathcal{K})=K.
	\end{align} 
	For the gateway, the procedure is similar. From the definition of $ \mathcal{S}_i $, we have  $ \sum_{\mathcal{S}\in\mathcal{S}_i}p^{g}(\mathcal{S})=\hat{p}^{g}_i$ and $\sum_{\mathcal{S}\in\tilde{\mathcal{Q}}^{g}}p^{g}(\mathcal{S})=1 $. Here, the number of IoTDs in each $ \mathcal{S}\in\tilde{ \mathcal{Q}}^{g} $ is not equal, however, to build $ \mathcal{S}_i, \forall i \in \mathcal{N} $, we must define a set with $ D $ repetitions of $ \tilde{ \mathcal{Q}}^{g} $ where $ D $ is the maximum number of IoTDs in a strategy $ \mathcal{S} \in \tilde{ \mathcal{Q}}^{g} $, i.e., $ D\triangleq\max_{\mathcal{S}\in \tilde{ \mathcal{Q}}^{g}}|\mathcal{S}| $. Since some of the IoTDs might not be included in the strategies which consist of $ D $ IoTDs, the set $ \mathcal{S}_i $ can repeat more than once for these IoTDs, and, therefore, we will have: $
	\sum_{i=1}^{N}\alpha_i\hat{p}_i^{g}=D\sum_{\mathcal{S}\in\tilde{\mathcal{Q}}^{g}}p^{g}({\mathcal{S}})=D,
	$
	where $ \alpha_i $ is the number of times $ \mathcal{S}_i $ shows up in $ D $ repetitions of set $ \tilde{ \mathcal{Q}}^{g} $.
\end{proof}
Proposition \ref{Proposotion:mapping} uncovers a linear relationship between the allocation probabilities. Using this relationship, and given that the attacker can have a successful attack on IoTD $ i $ if the defender does not authenticate IoTD $ i $, we can define the expected utility of the players as follows:
\begin{align}\label{eq:expected}
U^a(\hat{\boldsymbol{p}}^a,\hat{\boldsymbol{p}}^{g})&=\sum_{i=1}^{N}\hat{p}^a_i(1-\hat{p}^{g}_i)v_i, \, U^{g}(\hat{\boldsymbol{p}}^{g},\hat{\boldsymbol{p}}^a)=1-U^{g}(\hat{\boldsymbol{p}}^a,\hat{\boldsymbol{p}}^{g}),
\end{align}
with the condition in \eqref{eq:mapping}. Next, we derive the MSNE using \eqref{eq:mapping} and \eqref{eq:expected}.
\begin{theorem}\label{Theorem:MSNE}
	The defined game between the attacker and the gateway has infinitely many MSNEs that achieve a unique expected utility, $ V^{j^*} $, for each player $ j $. These MSNEs can be derived by solving the following minimax problem:
	\begin{equation}\label{eq:minimax}
		\begin{aligned}
		&V^{j^*}\triangleq 1-V^{-j^*}=\min_{\hat{\boldsymbol{p}}^{-j}}\max_{\hat{\boldsymbol{p}}^{j}}U^j(\hat{\boldsymbol{p}}^j,\hat{\boldsymbol{p}}^{-j}),\\ &\textrm{ s.t. } \,\, \sum_{i=1}^{N}\hat{p}_i^a=K,\,\,\, \sum_{i=1}^{N}\alpha_i\hat{p}_i^{g}=D.
		\end{aligned}
	\end{equation}
\end{theorem}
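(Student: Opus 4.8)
The plan is to recognize the game as a finite two-player \emph{constant-sum} game and to apply the minimax theorem, handling the passage from mixed strategies to allocation vectors with care. First I would record the structural facts already available: since $u^g(\mathcal{S},\mathcal{K})+u^a(\mathcal{K},\mathcal{S})=1$ for every strategy pair, the game is constant-sum, and by Proposition~\ref{Proposotion:mapping} together with \eqref{eq:expected} the expected utility of each player depends on the mixed strategies only through the allocation vectors, namely $U^a(\hat{\boldsymbol{p}}^a,\hat{\boldsymbol{p}}^{g})=\sum_{i=1}^{N}\hat{p}^a_i(1-\hat{p}^{g}_i)v_i$ and $U^{g}=1-U^{a}$, subject to the linear constraints $\sum_i\hat{p}^a_i=K$ and $\sum_i\alpha_i\hat{p}^{g}_i=D$. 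Thus the entire analysis can be carried out on the allocation polytopes, which are convex and compact, and on which $U^a$ is \emph{bilinear} (affine in $\hat{\boldsymbol{p}}^a$ for fixed $\hat{\boldsymbol{p}}^{g}$ and affine in $\hat{\boldsymbol{p}}^{g}$ for fixed $\hat{\boldsymbol{p}}^a$).

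Next I would invoke the minimax theorem. Because $U^a$ is bilinear and the feasible allocation sets are convex and compact, Sion's (equivalently von Neumann's) minimax theorem guarantees that the saddle value exists and that the order of optimization may be exchanged, giving $\min_{\hat{\boldsymbol{p}}^{g}}\max_{\hat{\boldsymbol{p}}^a}U^a=\max_{\hat{\boldsymbol{p}}^a}\min_{\hat{\boldsymbol{p}}^{g}}U^a$. This common value is exactly $V^{a^*}$ in \eqref{eq:minimax}, and the constant-sum property forces $V^{g^*}=1-V^{a^*}$. \emph{Uniqueness} of the expected utility is then immediate: the saddle value is a single number defined by the two coinciding optimization orders, and in a constant-sum game every MSNE yields precisely this value (any equilibrium strategy is a maximinimizer, so no equilibrium can earn more or less). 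Hence every MSNE of the game attains the same pair $(V^{g^*},V^{a^*})$, which is what the uniqueness claim asserts.

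It remains to argue that there are \emph{infinitely many} MSNEs. The set of optimal allocation pairs is the product of the two players' optimal sets, each of which is the intersection of a simplex-type region with the linear equilibrium (indifference) conditions, hence a convex polytope. The decisive observation is that the map \eqref{eq:probs} sending a mixed strategy $\boldsymbol{p}^j$ to its allocation vector $\hat{\boldsymbol{p}}^j$ is \emph{linear and strongly many-to-one}: the number of non-dominated pure strategies (for the attacker $\binom{N}{K}$, and for the gateway the exponentially many feasible subsets) vastly exceeds the dimension $N$ of the allocation space. Therefore the kernel of this linear map is nontrivial, and for any equilibrium allocation vector the set of mixed strategies mapping to it and lying in the simplex is positive-dimensional, hence infinite. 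Lifting a single equilibrium allocation pair through these infinite fibers produces infinitely many distinct mixed-strategy equilibria.

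The hard part is this last lifting step. One must verify that the fiber over an optimal allocation vector actually meets the \emph{feasible} simplex in more than one point, i.e., that a nontrivial kernel direction can be added to a feasible mixed strategy while preserving nonnegativity and the normalization $\sum p^j(\cdot)=1$. I would handle this by exhibiting two distinct feasible mixed strategies with the same allocation vector (e.g., by reshuffling probability mass among pure strategies that cover the same IoTDs with identical marginals $\hat{p}_i^j$), and then appeal to the convexity of the equilibrium set so that the entire connecting segment consists of equilibria achieving $V^{j^*}$, yielding a continuum of MSNEs.
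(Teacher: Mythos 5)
Your proposal is correct and follows essentially the same route as the paper's own proof: the constant-sum structure gives the minimax characterization of the equilibrium value, linearity of the utilities and constraints over the allocation polytopes gives uniqueness of the expected utility, and the many-to-one linear map \eqref{eq:probs} from mixed strategies to allocation vectors (far more pure strategies than the $N$ marginal equations) yields infinitely many MSNEs. In fact your write-up is more careful than the paper on the two points it glosses over: you correctly attach uniqueness to the saddle \emph{value} rather than asserting a unique optimizer of \eqref{eq:minimax}, and you explicitly flag, and sketch how to close, the step the paper leaves implicit --- that the fiber of \eqref{eq:probs} over an equilibrium allocation vector must meet the feasible probability simplex in more than one point before one can conclude a continuum of equilibria.
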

\begin{proof}
	Since the game between the players is constant-sum, then the expected utility of the game at MSNE is the solution of minimax problem in \eqref{eq:minimax}\cite{bacsar1998dynamic}. Moreover, the defined expected utilities and the constraints in \eqref{eq:mapping} and \eqref{eq:expected} are linear functions, therefore, the minimax problem in \eqref{eq:minimax} has a single solution which we call $ \hat{\boldsymbol{p}}^{j^*} $ and thus the expected utility is unique, $ V^{j^*} $. Moreover, using the mapping between the allocation vectors and mixed-strategy vectors we can find the mixed-strategies at MSNE, by solving a set of $ N $ equations in \eqref{eq:probs}. However, since the number of the attacker's and the gateway's strategies are greater than $ N $, then solving this set of equations will result in an infinite number of solutions, i.e., infinitely many MSNEs. 
\end{proof}
To solve problem \eqref{eq:minimax}, one must find all the non-dominated strategies of the gateway to derive values of $ D $ and $ \alpha_i $. However, as discussed before, finding all the gateway's non-dominated strategies is challenging in massive IoT scenarios. Therefore, in a massive IoT scenario analytically deriving the MSNE by using traditional algorithms such as minimax is computationally expensive. Moreover, the gateway will need to store the player's massive strategy set and re-run the entire steps of the conventional algorithms to reach an MSNE as the IoT system changes or a new IoTD joins to the system. Hence, the delay during the convergence process of such algorithms may not be tolerable for massive IoT scenarios. Also, at each time step, since the gateway cannot have complete information about the unauthenticated IoTDs due to its resource limitation which makes the convergence of conventional algorithms not suitable for finding MSNE. Therefore, we propose two learning algorithms: a) a \emph{fictitious play (FP)} for a complete information game where the gateway knows all IoTDs' states at each time step and b) a \emph{deep reinforcement learning (DRL)} algorithm that considers the gateway's lack of information about the unauthenticated IoTDs. 
\subsection{Fictitious Play for Complete Information}
To find the allocation vector at MSNE, $ \hat{\boldsymbol{p}}^{j^*}\hspace{-1mm} $, we propose a learning algorithm based on fictitious play (FP){\cite{heinrich2015fictitious}}. Since our two-player game is constant-sum, using the results of \cite{bacsar1998dynamic} and \cite{heinrich2015fictitious}, we can guarantee convergence FP to an MSNE. In the proposed algorithm, each player uses its belief about the allocation vector that its opponent will adopt. This belief stems from previous observations and is updated at every iteration. Given that the FP algorithm learns the allocation vector rather than the mixed strategies, it does not require storing the set of players' strategies thus significantly reducing the complexity of finding MSNE compared to von Neuman's approach. { Note that the von Neuman approach has a combinatorial complexity $ \mathcal{O}(2^N) $. In contrast, by using FP, the complexity reduces to $ \mathcal{O}(N) $}. Let $ \boldsymbol{\delta}^j(t) $ be player $ j $'s perception of the mixed strategy that $ -j $ adopts at time instant $ t $. Each element $\delta_i^j(t)$ of $ \boldsymbol{\delta}^j(t) $ represents the belief that $ j $ has at time $ t $, that is the probability of attacking IoTD $ i $ (for the attacker) or authenticating an IoTD $ i $ (for the defender). Such a belief can be built based on the empirical frequency with which $ j $ has chosen IoTD $ i $ in the past.  Thus, let $ \eta_i^j(t) $ be the number of times that $ j $ has observed $ -j $ choosing IoTD $  i $ up to time instant $ t $. Then, $ {\delta}^j_i(t) $ can be calculated as follows:
\begin{align}
	{\delta}^j_i(t)=\frac{\eta_i^j(t)}{\sum_{i=1}^{n}\eta_i^j(t)}.
\end{align}

To this end, at time instant $ t+1 $, based on the vector of empirical probabilities, $ \boldsymbol{\delta}^j(t) $, that it has perceived until time $ t $, each player $ j $ chooses the IoTDs that maximize its expected utility with respect to its belief about its opponent while considering both players' constraints, that is, the attacker chooses a set $ \mathcal{H}^a $ of $ K $ IoTDs such that:
\begin{align}\label{eq:brattacker}
	 \mathcal{H}^a(t)  = \arg \max_{\mathcal{K}}{U^a(\mathcal{K},\boldsymbol{\delta}^a(t) )},\,\, \text{ s.t }  |\mathcal{K}|=K.
\end{align}
Meanwhile, the gateway chooses a set of IoTDs such that the resulting computational complexity does not exceed its computational resource constraints, as follows:
\begin{align}\label{eq:brgateway}
\mathcal{H}^{g}(t)  = \arg \max_{\mathcal{S}}{U^{g}(\mathcal{S},\boldsymbol{\delta}^{g}(t) )},\,\, \text{ s.t }  \sum_{i\in\mathcal{H}^{g}(t)}v_i \leq R .
\end{align}
After each player $ j $ chooses its strategy at time instant $ t+1 $, it can update its belief as follows:
\begin{align} \label{eq:beliefupdate}
	\delta_i^j(t+1)=\frac{t}{t+1}\delta_i^j(t)+\frac{1}{t+1}\mathds{1}_{\left\{i\in \mathcal{H}^{-j}(t)\right\}}.
\end{align}
This learning process proceeds until the calculated empirical frequencies converge. Convergence is achieved when: $
	\left|\delta_i^j(t+1)-\delta_i^j(t)\right|=\epsilon, \,\,\, \forall i\in\mathcal{N}, \, \forall j \in \mathcal{P},
$ where $ \epsilon $ is the convergence error which is a very small number.

This algorithm assumes complete information for the players since they update their strategies based on \eqref{eq:brattacker} and \eqref{eq:brgateway} at each step, i.e., this algorithm assumes that the attacker knows which IoTDs the gateway authenticated at the previous step, and the gateway knows which IoTDs the attacker attacked at the previous step. This can be practical in scenarios in which even the unauthenticated IoTDs can report their security status to the gateway. However, in more realistic scenarios, the gateway can only have information about the IoTDs that it has authenticated at the previous steps. {Note that, we consider that the attacker can always observe the outcome of the game since any intelligent attacker will normally have an end-goal from its attack. For example, by attacking a smart grid sensor (which can be considered as an IoTD), the attacker can aim at changing the state estimation of the power grid \cite{watermark2015Mo} and \cite{Sanjab2016}}. Thus, next, we propose a DRL algorithm based on LSTM blocks \cite{heinrich2016deep} to consider the incomplete information case.

\begin{figure}[!t]
	\centering
	\captionsetup{justification=justified ,singlelinecheck=false}
	\includegraphics[width=\columnwidth]{./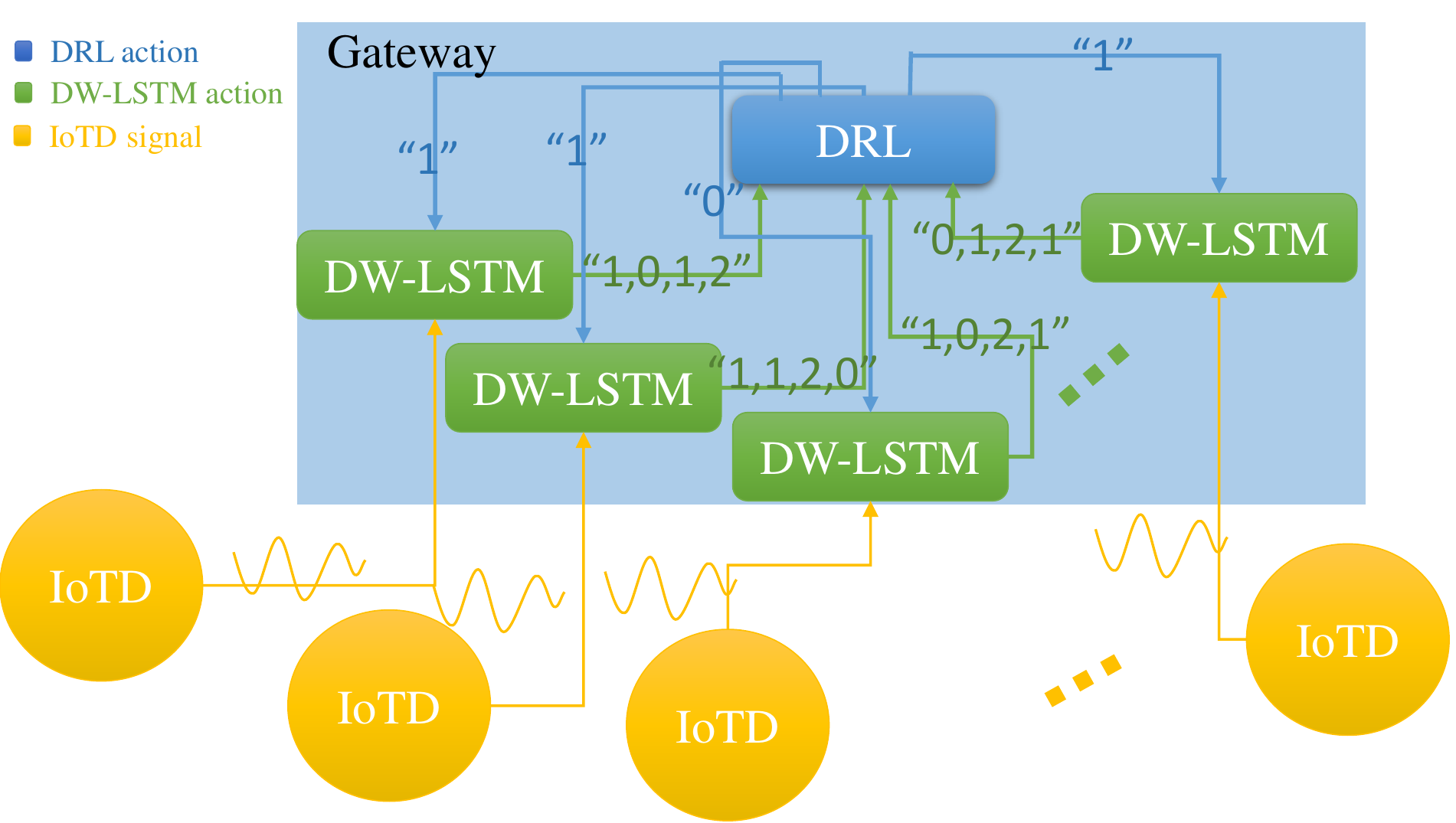}
	\caption{\small Security architecture for massive IoT scenario with incomplete information. At the gateway, DW-LSTM blocks receive the signal from IoTDs, and, then, they authenticate the received signal if DRL triggers them by sending a value $ 1 $. Finally, DW-LSTMs send their associated IoTD's state to the DRL.}
	\label{fig:massiveiot}
	\vspace{-6mm}
\end{figure}
\vspace{-5mm}
\subsection{Deep Reinforcement Learning for Authentication under Incomplete Information}
To capture the incomplete information at the gateway, we propose a DRL{\cite{heinrich2016deep}} method using LSTM blocks as shown in Fig. \ref{fig:massiveiot}. The architecture consists of two components: a) a deep learning-based dynamic watermarking for IoTD signal authentication as discussed in Section \ref{sect:sensor} and b) a DRL algorithm based on LSTM blocks to learn which IoTDs to authenticate at each step based on the attacker's previous actions and the gateway's computational constraints.

{As discussed for the FP algorithm, the players use \eqref{eq:brattacker} and \eqref{eq:brgateway} to find the best action at each step and update their belief about their opponent's actions. The attacker can observe its immediate payoff resulting from the action that it takes at each time step (since the attacker can see if its attack was successful or not) and then choose its best strategy at each time step using \eqref{eq:brattacker}. However, the gateway cannot compute its immediate payoff at each time step since the gateway can only observe the security state of the authenticated IoTDs, and, hence, the state of the unauthenticated IoTDs will remain unknown to the gateway. Thus, the gateway cannot find its best action at each time step using \eqref{eq:brgateway}. To overcome this challenge, we propose to use deep neural networks to approximate the the gateway's payoff at each time step. In particular, we use an LSTM at the gateway which receives the action stream of the attacker at the previous time steps, approximates the gateway's payoff at each time step and, then, chooses the gateway's optimal action. As discussed previously, LSTM blocks are very useful in sequence to sequence mapping and future prediction based on past sequences. Such a (deep reinforcement learning) DRL algorithm has been used in many applications such as in imperfect-information games \cite{heinrich2016deep}, autonomous vehicles \cite{fridman2018deeptraffic}, and human-level control systems \cite{mnih2015human}.
	
The proposed DRL algorithm has two components: (i) a deep neural network (DNN) that summarizes the past actions of the attacker and (ii) a reinforcement learning (RL) component, which can be used by each player to decide on the best action to choose based on the summary from the DNN, as shown in Fig. \ref{fig:DeepRL}.
	
To derive the gateway's action that maximizes its expected utility, we use a Q-learning algorithm (a special RL method) \cite{heinrich2016deep}. In this algorithm, we define a state-action value $ Q $-function, $ Q(\mathcal{S},\boldsymbol{h}) $, which is the expected return of the gateway when starting at a \emph{state} $ \boldsymbol{h} $ and performing action $ \mathcal{S} $. The state $ \boldsymbol{h} $ is an $ N $ dimensional attacker action sequence which is defined as follows: Each dimension $ i $ of $ \boldsymbol{ h} $ is a stream $ {\boldsymbol{ h}_i(t)=} \left[h_i^1(t),\dots,h_i^q(t)\right] $ from each DW-LSTM at the gateway where $ h_i^l(t) $ is the action of the attacker on IoTD $ i $ at step $ t-(q-l+1) $. This action can take any value in $ \left\{0,1,2\right\} $ where $ 0 $ indicates no attack, $ 1 $ indicates under attack, and $ 2 $ indicates that the gateway did not authenticate this IoTD at this time instant. To derive the maximizer action at each time step for the gateway, we use the following update rule for the $ Q $-function (known as Bellman equation\cite{heinrich2016deep}):
\begin{align}\label{eq:Qfunction}
Q_{t+1}(\mathcal{S}(t),\boldsymbol{h}(t))\hspace{-0.5mm}=Q_t(\mathcal{S}(t),\boldsymbol{h}(t))+\alpha \Big[U^g(\mathcal{S}(t),\boldsymbol{\delta}^g(t))\nonumber\\
+\gamma\max_{\mathcal{S}}\hspace{-0.5mm}Q_{t+1}(\hspace{-0.5mm}\mathcal{S}\hspace{-0.5mm},\hspace{-0.5mm}\boldsymbol{h}(t+1)\hspace{-0.5mm})-\hspace{-0.5mm}Q_{t}\hspace{-0.3mm}(\hspace{-0.3mm}\hspace{-0.5mm}\mathcal{S}(t)\hspace{-0.5mm},\hspace{-0.5mm}\boldsymbol{h}(t)\hspace{-0.5mm})\hspace{-0.5mm}\Big],
\end{align}
where $ \alpha $ is a learning rate and $ \gamma $ is a discount factor. From \eqref{eq:Qfunction}, we can see that, at each step, the gateway must find the action which maximizes  $ Q_t $. However since each $ h^l_i(t) $ can take any value in $ \{0,1,2\} $ and $ \boldsymbol{h} $ has $ {q\times N} $ elements, thus, $ \boldsymbol{h} $ can have $ 3^{q\times N} $ different values. Storing this many values in a table and finding the maximizer action is challenging.  
	
To solve such a challenging problem, we propose to use LSTM blocks that can store information for long periods of time and, thus, can learn long-term dependencies within a given sequence. Thus, the proposed DRL algorithm will use a DNN as shown in Fig. \ref{fig:DeepRL} to approximate the $ Q $ function for the gateway and using this $ Q $ function we will choose optimal actions for the gateway from \eqref{eq:Qfunction}. This then allows to overcome the complexity challenges. Algorithm \ref{Algorithm:DeepRL} summarizes the proposed DRL approach that is used by the gateway to learn its optimal action vectors. This algorithm takes the sampling frequency and the gateway's resource as input, then, starts learning from a random belief for each player. {The gateway uses its belief vector at each time step to estimate the state of the unauthenticated IoTDs. To this end, every IoTD that is not authenticated at time step $ t $, i.e., $\forall l\in\left\{1,\dots, N\right\}: h_l(t) = 2 $, is assigned a value $ h_l(t) =1 $ with probability $ \delta_l^g(t) $ and $ h_l(t) =0 $ with probability $ 1-\delta_l^g(t) $. Then the gateway can approximate its payoff.} While the attacker performs the fictitious play as in the previous section, the gateway aims at first approximating its payoff (note that the gateway cannot find its exact payoff since it does not have complete information about the IoTD states) and then finding the best action that maximizes its payoff at each time step. Moreover, Fig. \ref{fig:DeepRL} shows the DNN architecture for the proposed DRL algorithm. Note that in this architecture the output of the LSTM block is fed to a fully connected layer which is followed by a regression layer. In a fully connected layer, neurons have connections to all activations in the previous layer, as seen in regular neural networks. In DNNs, high-level reasoning is done via fully connected layers\cite{chen2017machine}.}
\begin{algorithm}[t]
	\caption{Deep Reinforcement Learning for Authentication Under Incomplete Information.}
	\begin{algorithmic}[1]\footnotesize 
		\State \textbf{Input} $ \mathcal{F}\triangleq\left\{v_1,\dots,v_N\right\} $, $ R $, 
		\State Initialize one \emph{replay memory} $ M $ that stores the past experiences of the gateway, one DNN for $ Q $, $\boldsymbol{\delta}^a(0)$, and $\boldsymbol{\delta}^{g}(0)$.  
		\State \textbf{Repeat:}
		\State \quad Select an action $ \mathcal{K} $ for the gateway using \eqref{eq:brattacker},
		\State \quad Observe state $ \boldsymbol{ h}(t) $ for the gateway.
		\State \quad \quad {$ \forall i \in \left\{1,\dots,N\right\} $ if $ h_i(t) = 2 $, then $ h_i(t) = 1 $ with probability $ \delta^g_i(t) $, otherwise  $ h_i(t) = 0 $.}
		\State \quad \quad with probability $ \varepsilon $ select a random action $ \mathcal{S} \in \tilde{ \mathcal{Q}}^g $,
		\State \quad \quad otherwise select $ \mathcal{S} = \argmax_{\mathcal{S}'} Q(\mathcal{S}',\boldsymbol{h}(t)) $ s.t. $ \mathcal{S}' \in \tilde{ \mathcal{Q}}^g $.
		\State \quad Perform actions $ \mathcal{S} $ and $ \mathcal{K} $ for both players simultaneously. 
		\State \quad Observe utility $ U^{g}(\mathcal{S},\boldsymbol{\delta}^{g}(t+1)) $ and new state $ \boldsymbol{h}(t+1) $.
		\State \quad Store \emph{experience} $ \left\{\boldsymbol{h}(t),\mathcal{S}(n),U^{g}(\mathcal{S}(t+1),\boldsymbol{\delta}^{g}(t+1)),\boldsymbol{h}(t+1)\right\} $ in replay memory $ M $ for the gateway.
		\State \quad Sample a random experience $ \left\{\hspace{-0.5mm}\hat{\boldsymbol{h}}(\tau), \hspace{-0.5mm}\hat{\mathcal{S}}(\tau),\hspace{-0.5mm}\hat{U}^{g}(\hat{\mathcal{S}}(t+1),\hat{\boldsymbol{\delta}}^{g}(t+1)),\hspace{-0.5mm}\hat{\boldsymbol{ h}}(\tau+1)\hspace{-0.5mm}\right\} $ from the replay memory $ M $.
		\State \quad Calculate the \emph{target} value $ t $ for the gateway:
		\State \quad \quad If the sampled experience is for $ \tau=0 $ then $ t=\hat{U}^{g}  $,
		\State \quad \quad Otherwise $ t=\hat{U}^{g} + \gamma \max_{\mathcal{S}'} Q(\mathcal{S},\alpha'^j) $.
		\State \quad Train the network $ Q^j $ for each player using $ [t^j-Q(\hat{\mathcal{S}}(\tau),\hat{\boldsymbol{ h}}(\tau) ]^2$.
		\State \quad $ t = t+1 $.
		\State \quad Update beliefs $ \boldsymbol{\delta}^{g}(t) $ and $ \boldsymbol{\delta}^a(t) $ using \eqref{eq:beliefupdate}.
		\State \textbf{Until} convergence.
	\end{algorithmic}
	\label{Algorithm:DeepRL}
\end{algorithm}
\begin{figure*}
	\centering
	\includegraphics[width=0.8\textwidth]{./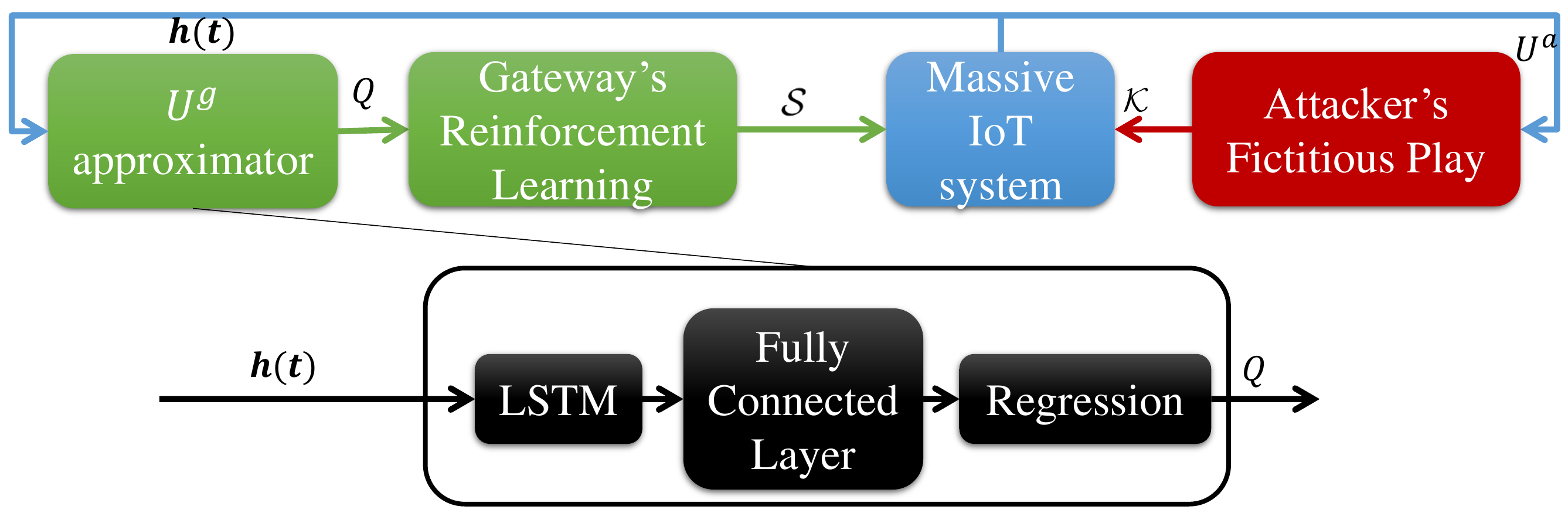}
	\caption{\small The proposed DRL architecture.}
	\label{fig:DeepRL}
	\vspace{-4mm}
\end{figure*}

Although this algorithm might not converge to an MSNE, since it does not have complete information about all the IoTDs' state, however, since the gateway uses an LSTM method to predict the attacker's future actions based on the interdependence of attacker's past actions, it can choose a set of IoTDs at each step that minimizes the number of compromised IoTDs. This, in turn, will yield a desirable solution for the system under incomplete information. In Section \ref{sect:sim}, the simulation results will show that the expected utility resulting from this DRL algorithm is higher than baseline scenarios, in which the gateway authenticates IoTDs with equal probabilities or proportional to IoTDs' values.
\section{Simulation Results and Analysis}\label{sect:sim}
For our simulations, we use a real dataset from an accelerometer with sampling frequency $ f_s=1 \text{kHz} $. In each simulation, we derive the optimal values for $ \beta $, $ n $, and $ n_s $ using the method proposed in Section \ref{sect:sensor} such that they satisfy the reliability and delay constraints.

\begin{figure*}
	\begin{subfigure}{0.5\textwidth}
		\centering
		\includegraphics[width=\columnwidth]{./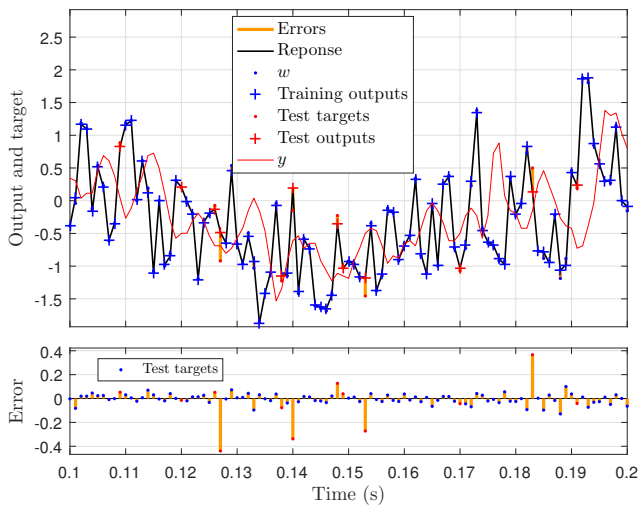}
		\caption{\small Comparison of IoT signal, static watermarked signal, dynamic watermarked signal, and training error.}
		\label{fig:trainingout}
	\end{subfigure}
	\begin{subfigure}{0.5\textwidth}
		\centering
		\includegraphics[width=0.9\columnwidth]{./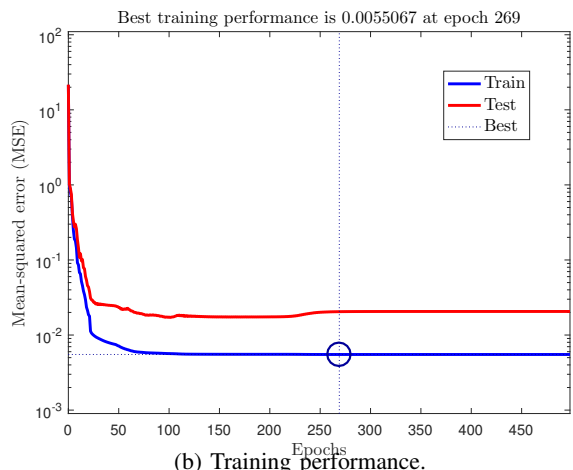}
		\vspace{-4mm}
		\caption{\small Training performance.}
		\label{fig:performance}
	\end{subfigure}
	\caption{\small Training phase of LSTM blocks.}
	\label{fig:lstmtraining}
	\vspace{-4mm}
\end{figure*}

Fig. \ref{fig:trainingout} shows the output of the LSTM trainer with $ n=10,n_s=10, $ and $ \beta=0.5 $. From Fig. \ref{fig:trainingout}, we observe that the trained output of the LSTM, which is the dynamic watermarked signal, is very close to the training target $ w $. Moreover, Fig. \ref{fig:performance} shows that the training phase converges after $ 269 $ \emph{epochs}. Here, an epoch is a measure of the number of times all the training vectors are used once to update the weights of the neural network. Since each IoTD has its own LSTM, after training the IoTDs in parallel for approximately one day, we can use and effectively authenticate the IoTDs in the system. {The training error is $  0.0055 $ which is calculated using mean squared error. Moreover, we tested the trained LSTM on another accelerometer data, and the testing error is small and close to $ 0.015 $.}

\begin{figure*}
	\begin{subfigure}{0.5\textwidth}
	\centering
	\includegraphics[width=\columnwidth]{./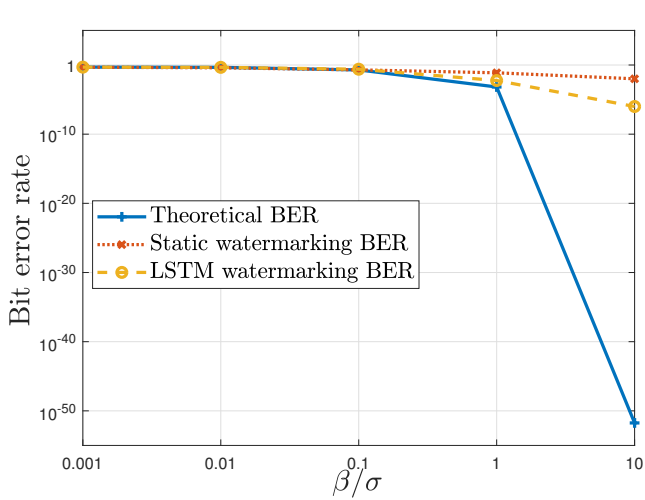}
	\caption{\small Bit error rate of proposed watermarking algorithms.}
	\label{fig:beta}
	\end{subfigure}
	\begin{subfigure}{0.5\textwidth}
	\centering
	\includegraphics[width=\columnwidth]{./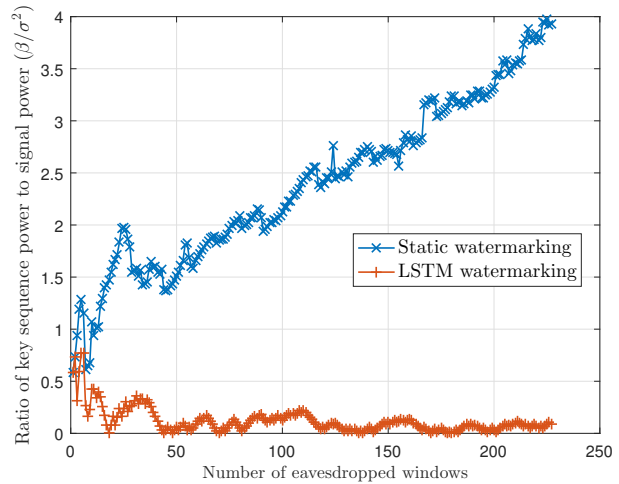}
	\vspace{-4mm}
	\caption{\small The power ratio of key to signal in presence of a {dynamic data injection} attack.}
	\label{fig:eavesdroppingpower}
	\end{subfigure}
	\caption{\small Dynamic watermarking LSTM performance.}
	\label{fig:DW-LSTM}
\end{figure*}
\begin{figure*}
	\centering
	\captionsetup{justification=justified,singlelinecheck=false}
	\includegraphics[width=\textwidth]{./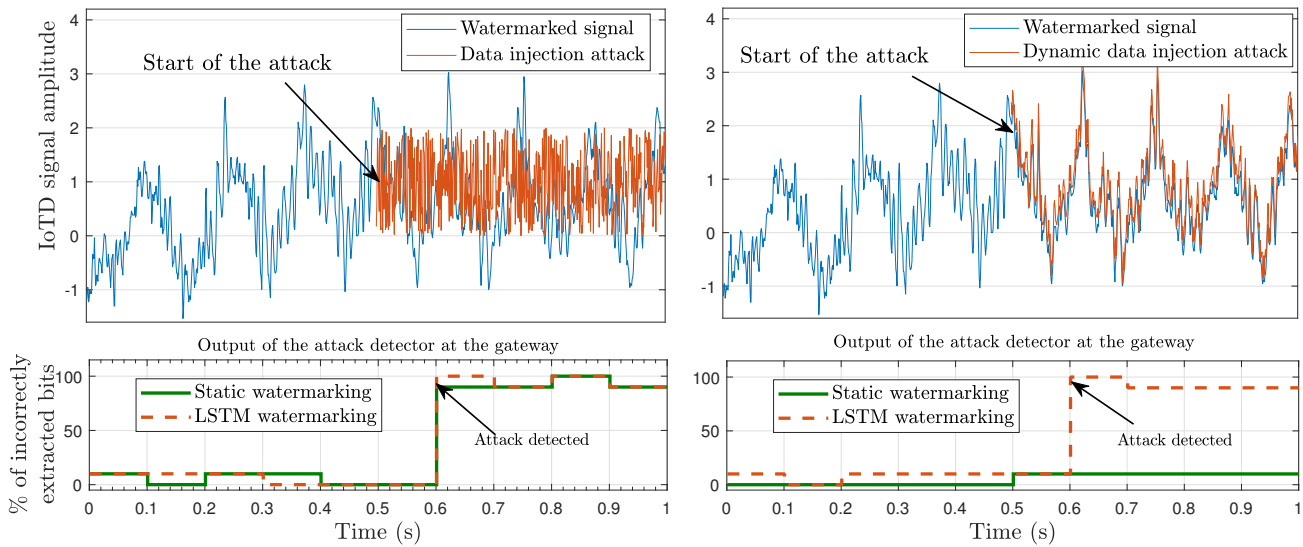}
	\caption{\small Attack detection analysis of static and LSTM watermarking schemes. The  y-axis in the lower figures shows the percentage of difference between the extracted bit stream and the hidden bit stream. This difference is used at the gateway as a metric to detect attacks on the IoTDs.}
	\label{fig:attackdetection}
\end{figure*}

Fig. \ref{fig:beta} illustrates the higher performance of LSTM compared to static watermarking in bit extraction. From \eqref{errprob}, we know that higher $ \beta/\sigma $ results in lower bit error. We can see from Fig. \ref{fig:beta} that the extraction error rate for LSTM is approximately one order of magnitude lower than static watermarking when $ \beta/\sigma=1 $. This ratio improves for higher $ \beta/\sigma $, as Fig. \ref{fig:beta} shows that the error rate of LSTM is almost two orders of magnitude lower than static watermarking when $ \beta/\sigma=10 $. This result allows designing attack detectors with lower delay, since we can choose lower $ n $ for LSTM which results in smaller window size and reduces the detection delay. In addition, Fig. \ref{fig:eavesdroppingpower} shows how a {dynamic data injection} attack operates against the two watermarking schemes. We can see that, in static watermarking, the attacker records the signal and by summing the recorded data of each window, increases the ratio of pseudo-noise key power to the signal power and extracts the bit stream. However, since in LSTM the bit stream dynamically changes in each window, the summation of the recorded data will not increase the ratio of the key power to the signal power. Therefore, the attacker will not be able to extract the bit stream and key from the recorded data. 

To analyze the effectiveness of our proposed watermarking schemes in attack detection, we choose a static watermarking block with $ n=10, n_s=10$, and $ \beta=0.5 $. {Note that, in order to choose the number of LSTM blocks, we follow a trial and error procedure by choosing different number of LSTM blocks and checking the error rate. Finally, the best performance was achieved using 100 successive LSTM blocks at the gateway and IoTDs.} We also train our LSTM blocks with these features. Then, we implement two types of attacks: a) a data injection attack in which the attacker starts to change the IoTD signal and b) a {dynamic data injection} attack, in which the attacker records the data from the IoTD, extracts the bit stream, and implements an attack with the the same watermarking bit stream. In Fig. \ref{fig:attackdetection}, the attack detectors compare the extracted bit stream with the actual hidden bit stream and the percentage of difference between these two is considered as a metric for attack detection. In other words, a high difference between the two bit streams triggers an attack detection alarm. Fig. \ref{fig:attackdetection} shows that, for the first attack, both watermarking schemes can detect the attack. However, for a {dynamic data injection attack}, static watermarking cannot detect the existence of attack while the LSTM performs well. The reason is that, in static watermarking, the bit stream is the same for all the time windows while in LSTM the watermarked bit stream dynamically changes for each time window. In addition, we can see from Fig. \ref{fig:attackdetection} that the delay of attack detection is $ 0.1 $ seconds since the attacks starts at $ 0.5 s $ and the attack detector triggers the alarm at $ 0.6 s $. The reason is that, for a window size of $ n\times n_s /f_s=0.1 $ seconds, the gateway must wait for one window to collect the IoTD data.

\begin{figure*}[!t]
	\begin{subfigure}{0.5\textwidth}
		\centering
		\includegraphics[width=\columnwidth]{./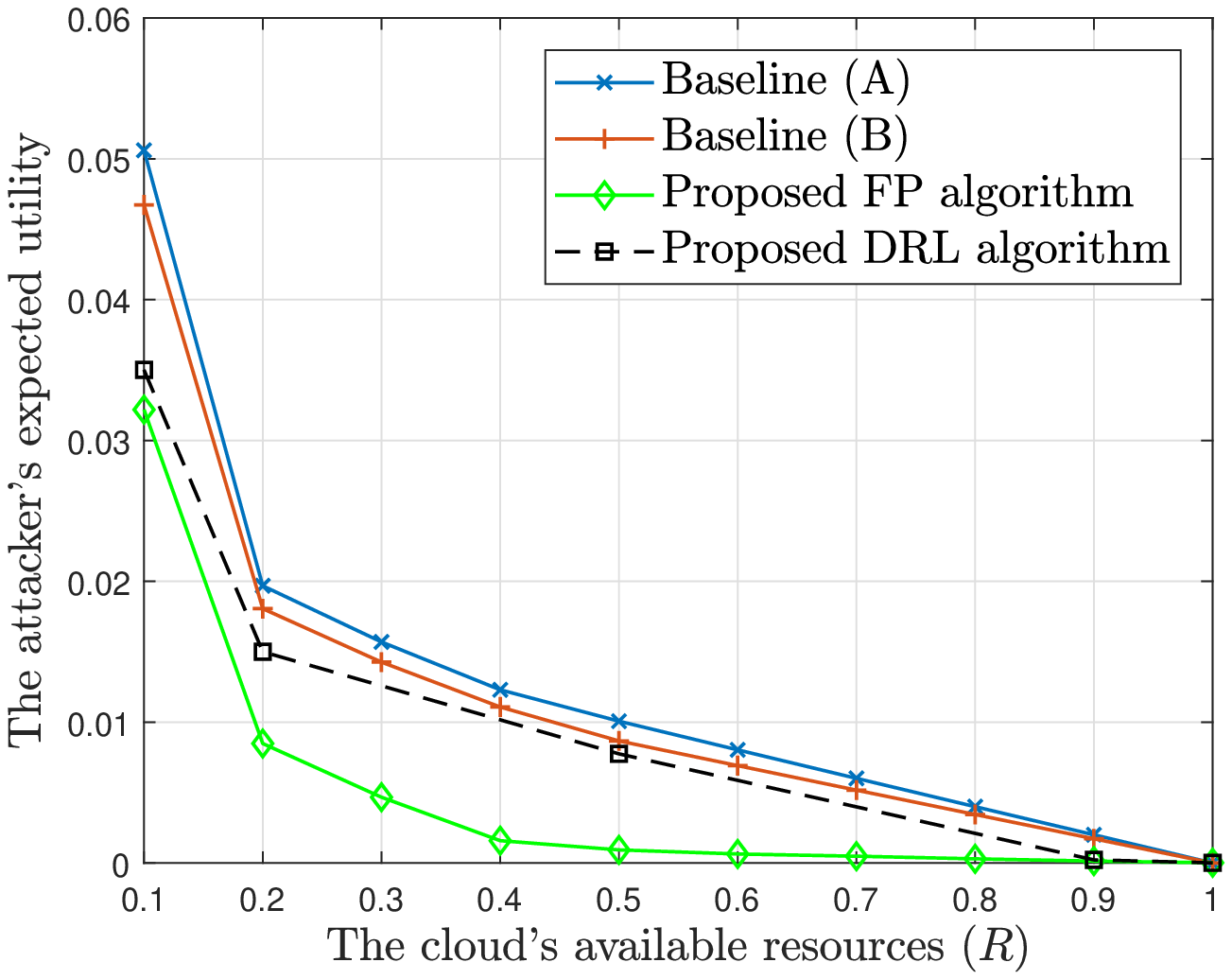}
	\end{subfigure}
	\begin{subfigure}{0.5\textwidth}
		\centering
		\includegraphics[width=\columnwidth]{./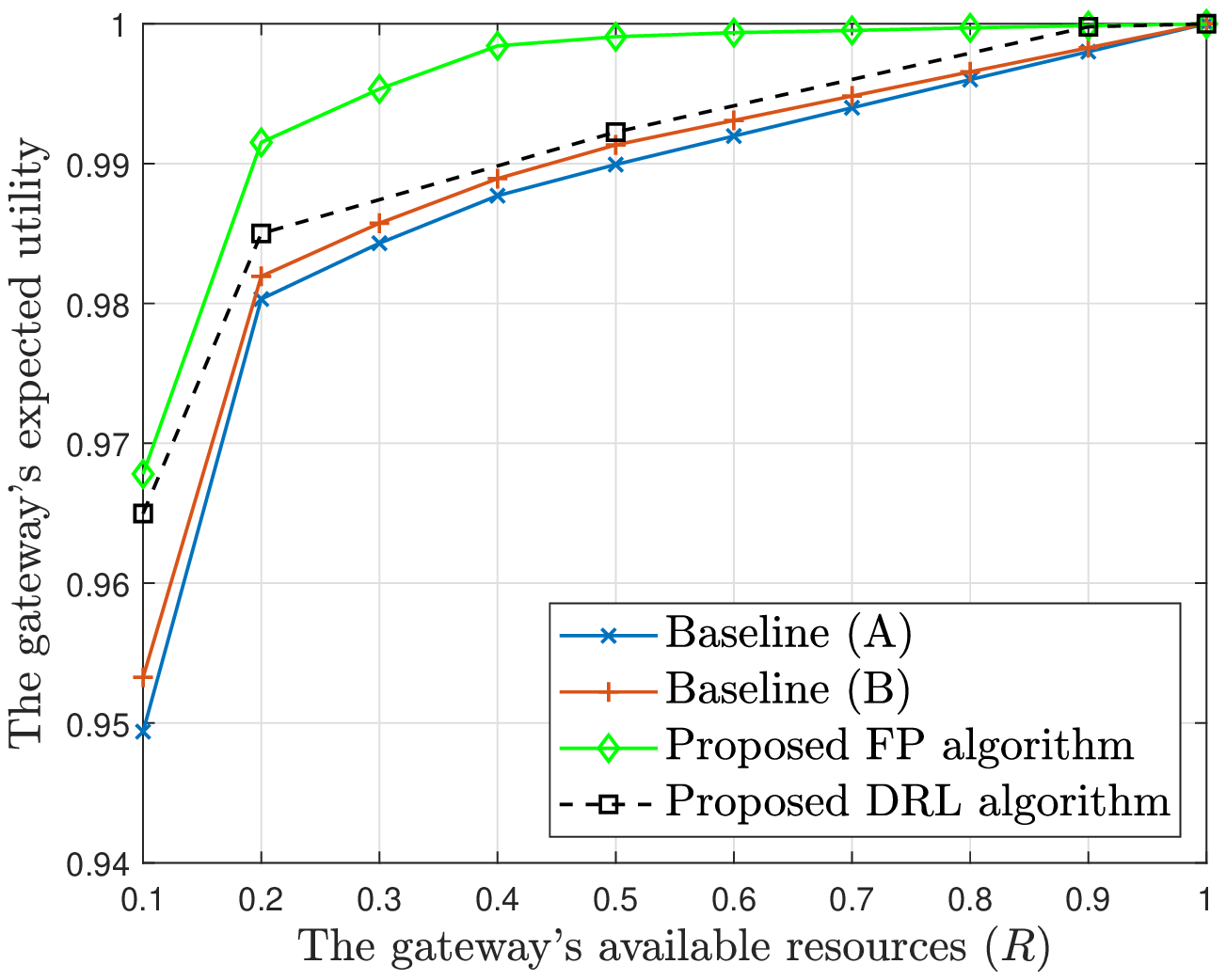}
	\end{subfigure}
	\captionsetup{justification=justified,singlelinecheck=false}
	\caption{\small The gateway's and attacker's expected utility as a function of gateway's resource, $ R $, for a scenario with $ k=100 $ and 1000 IoTDs. }
	\label{fig:limsweep}
\end{figure*}

Next, to evaluate the performance of the game-theoretic framework, we consider a system with $ 1000 $ IoTDs with sampling frequency in the range of $ [1000,15000] $ Hz. In these simulations, we compare our proposed algorithm to two baseline scenarios: baseline (A) in which the gateway chooses all the IoTDs with equal probability and baseline (B) in which the gateway chooses the IoTDs with probabilities proportional to their values. We simulate our proposed fictitious play for full information and our proposed DRL for incomplete information at the gateway.

Fig. \ref{fig:limsweep} shows a simulation in which the attacker has only $K=100$ recording devices, while for the gateway $ R $ takes values in the range $ [0.1,1] $. We can see from Fig. \ref{fig:limsweep} that, as the gateway's available resources increase from $ 0.1 $ to $ 1 $, the gateway can protect more IoTDs and thus its expected utility increases from $ 0.96 $ to $ 1 $, which indicates that when $ R=1 $, the gateway can protect all of the IoTDs. Fig. \ref{fig:limsweep} also show that, since the proposed FP algorithm converges to the MSNE, it outperforms both the baseline scenarios. For instance, for low available resources the attacker gains up to $ 40\% $ more utility which means that the attacker can compromise approximately $ 20 $ more IoTDs when the gateway chooses any one of the baseline strategies compared to the proposed fictitious play. Moreover, from Fig. \ref{fig:limsweep}, we can see that, for low available resources, the DRL algorithm yields approximately up to $ 30\% $ improvements compared to both baseline scenarios which is equivalent to $ 13 $ less compromised IoTDs.  However, it has up to $ 10\% $ less expected utility than FP, that is equivalent to $ 7 $ more compromised IoTDs, which is expected due to its operation under lack of information.

\begin{figure*}[!t]
	\begin{subfigure}{0.5\textwidth}
		\centering
		\includegraphics[width=\columnwidth]{./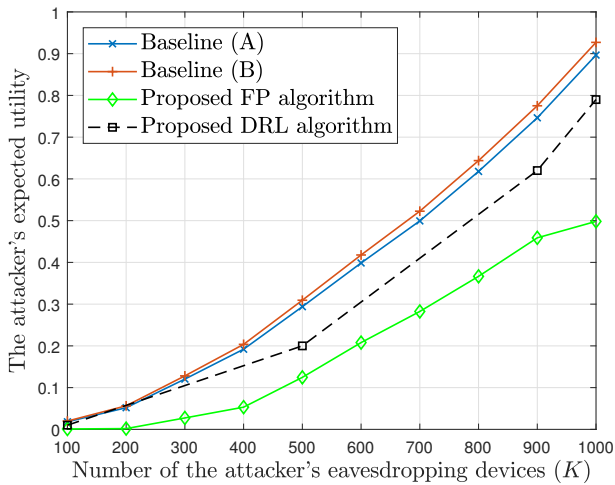}
	\end{subfigure}
	\begin{subfigure}{0.5\textwidth}
		\centering
		\includegraphics[width=\columnwidth]{./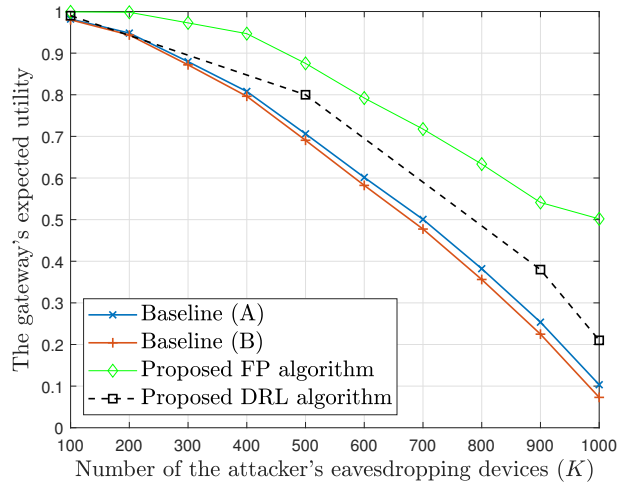}
	\end{subfigure}
	\captionsetup{justification=justified,singlelinecheck=false}
	\caption{\small The gateway's and attacker's expected utility as a function of attacker's resource, $ K $, for a scenario with $ R=0.5 $ and 1000 IoTDs. }
	\label{fig:ksweep}
\end{figure*}

Fig. \ref{fig:ksweep} shows the players' expected utility as the number of recording devices varies, for a scenario in which the available computational resources for the gateway are such that, $ R=0.5 $. Fig. \ref{fig:ksweep} shows that, as the number of attacker's recording devices increases, its expected utility increases and the attacker can disrupt more IoTDs while staying undetected. Fig. \ref{fig:ksweep} also shows that using the proposed FP, the gateway can gain a higher expected utility than the baseline scenarios. Note that, when the attacker can attack all the IoTDs, $ K=1000 $, the proposed FP algorithm has almost $ 5 $ times higher expected utility than the baseline scenarios. Furthermore, although proposed DRL has a lower expected utility than FP, it yields a $ 20\% $ improvement in the expected utility compared to the baseline scenarios for the cases where the attacker can attack all of the IoTDs.
\section{Conclusion}\label{sect:conc}
In this paper, we have proposed a novel deep learning method based on LSTM blocks for enabling detection of data injection attacks on IoTDs. We have introduced two watermarking schemes in which the IoT's gateway, which collects the data from the IoTDs, can authenticate the reliability of received signals. We have shown that our proposed LSTM method is suitable for IoTD-gateway signal authentication due to low complexity, small delay, and high accuracy in attack detection. Moreover, we have studied the massive IoT scenario in which the gateway cannot authenticate all the IoTDs simultaneously due to computational limitations. We have proposed a game-theoretic framework to address this problem, and we have derived the mixed-strategy Nash equilibrium and studied its properties. Furthermore, we have shown that analytically deriving the equilibrium is highly complicated in massive IoT scenarios, and, thus, we have proposed two learning algorithms two address this problem: a) a fictitious play algorithm that considers complete information about all the IoTDs' state and converges to the mixed-strategy Nash equilibrium and b) a deep reinforcement algorithm which predicts the set of vulnerable IoTDs for the case in which the gateway cannot have information about the unauthenticated IoTDs' state. Simulation results have shown the effectiveness of the proposed authentication schemes.
\begin{appendix}
\subsection{Proof of Theorem \ref{Theorem:hyperparameters}}\label{App:hyperparameters}
	$ \beta_i $ must be chosen such that the attacker cannot use $ w_i $ instead of $ p_i $ to extract the embedded bit. Therefore, we have to adjust $ \beta_i $ to cause a high bit error rate during the extraction of the hidden bit using the $ w $ sequence. Thus, for two different watermarked signals $ w_{i_1} $ and $ w_{i_2} $ with equal embedded bit $ b_i=1 $, we have:
	\begin{align}
	\hat{b}_i&=\frac{<w_{i_1},w_{i_2}>_{n_i}}{\beta_i^2 n_i^2}=\frac{<y_{i_1}+\beta_i b_i p_i,y_{i_2}+\beta_i b_i p_i>_{n_i}}{\beta_i^2 n_i^2}\nonumber\\
	&=\frac{<y_{i_1},y_{i_2}>_{n_i}+<y_{i_1},\beta_i b_i p_i>_{n_i}+<\beta_i b_i p_i,y_{i_2}>_{n_i}}{\beta_i^2 n_i^2}\nonumber\\&+\frac{<\beta_i b_i p_i,\beta_i b_i p_i>_{n_i}}{\beta_i^2 n_i^2}=Z_{i_1}+Z_{i_2}+Z_{i_3}+b_i^2,
	\end{align}
	where $ Z_{i_1} $, $ Z_{i_2} $, and $ Z_{i_3} $ are Gaussian random variables with distributions $ \mathcal{N}\left(\frac{\mu_{i_1}}{\beta_i^2 n_i^2},\frac{\sigma_{i_1}^2}{\beta_i^4n_i^3}\right) $, $\allowbreak \mathcal{N}\left(0,\frac{\sigma_i^2}{\beta_i^4 n_i^3}\right)$, $ \mathcal{N}\left(0,\frac{\sigma_i^2}{\beta_i^4 n_i^3}\right) $, respectively (the proof is analogous to Lemma \ref{Theorem:BER}). Then, we can write $ \hat{b}_i $ as $
	\hat{b}_i=b_i^2+Z_{i_4}=b_i+Z_{i_4},
	$ where $ Z_{i_4} \sim N\left(\frac{\mu_{i_1}}{\beta_i^2n_i^2},\frac{\sigma_{i_1}^2+2\sigma_i^2}{\beta_i^4n_i^3}\right) $. Therefore, the bit error rate incurred during the extraction of $ \hat{b}_i $ will be:
	\begin{align}
	\textrm{Pr}\left\{\hat{b}_i<0|b_i=1\right\}&=\frac{1}{2}\textrm{erfc}\left(\frac{E(\hat{b}_i)}{\sigma_{\hat{b}_i}\sqrt{2}}\right)
	\nonumber\\&=\frac{1}{2}\textrm{erfc}\left(\frac{(1+\frac{\mu_{i_1}}{\beta_i^2n_i^2})\beta_i^2n_i\sqrt{n_i}}{ \sqrt{2(\sigma_{i_1}^2+2\sigma_i^2)}}\right).\label{errprob2}
	\end{align}
Since we want to have a high bit error rate for the attacker, we can write:	$	\textrm{Pr}\left\{\hat{b}_i<0|b_i=1\right\}\geq 1-\underbar{P}.\label{errorineq}$ By using this inequelaity and \eqref{errprob2} we can find the inequality in \eqref{eq1}. Moreover we want to have a small extraction error for gateway as in \eqref{errprob}. Thus, we have: $ \textrm{Pr}\left\{\tilde{b}_i<0|b_i=1\right\}\leq \bar{\text{P}},
$ and using \eqref{errprob} we can find \eqref{eq2}. From \eqref{eq1} and \eqref{eq2}, we can derive the values for $ \beta_i $ and $ n_i $ which satisfy the security and performance requirement of the proposed watermarking scheme. Now, since we know that $ \frac{n_in_{s_i}}{{f_i^s}} $ seconds are needed to receive all the watermarked signal, then the maximum value for $ n_{s_i} $ can be found by \eqref{eq3}.
\end{appendix}

	\def\baselinestretch{1}
	\bibliographystyle{IEEEtran}
	\bibliography{references}
	\vskip -2\baselineskip plus -1fil 
\begin{IEEEbiography}[{\includegraphics[width=1in,height=1.25in,clip,keepaspectratio]{./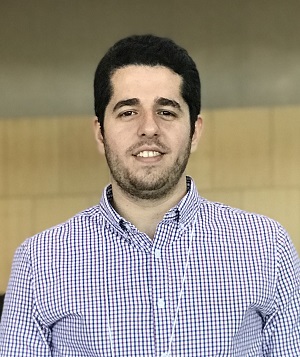}}]{Aidin Ferdowsi}
	received his BS in Electrical Engineering from the University of Tehran, Iran, in 2016. He is currently a Ph.D. student at the Bradley department of Electrical and Computer Engineering at Virginia Tech. He is also a Wireless@VT Fellow. His research interests include cyber-physical systems, machine learning, security, and game theory.
\end{IEEEbiography}
\vskip 0pt plus -1fil 
\begin{IEEEbiography}[{\includegraphics[width=1in,height=1.25in,clip,keepaspectratio]{./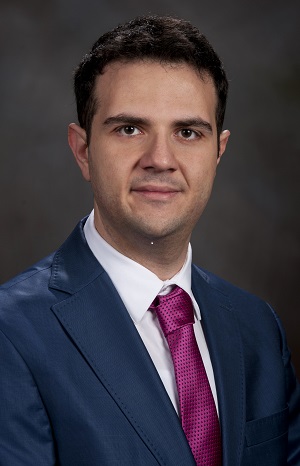}}]{Walid Saad}
	(S'07, M'10, SM’15) received his Ph.D degree from the University of Oslo in 2010. Currently,  he is an Associate Professor at the Department of Electrical and Computer Engineering at Virginia Tech, where he leads the Network Science, Wireless, and Security (NetSciWiS) laboratory, within the Wireless@VT research group. His  research interests include wireless networks, machine learning, game theory, cybersecurity, unmanned aerial vehicles, and cyber-physical systems. Dr. Saad is the recipient of the NSF CAREER award in 2013, the AFOSR summer faculty fellowship in 2014, and the Young Investigator Award from the Office of Naval Research (ONR) in 2015. He was the author/co-author of six conference best paper awards at WiOpt in 2009, ICIMP in 2010, IEEE WCNC in 2012,  IEEE PIMRC in 2015, IEEE SmartGridComm in 2015, and EuCNC in 2017. He is the recipient of the 2015 Fred W. Ellersick Prize from the IEEE Communications Society, of the 2017 IEEE ComSoc Best Young Professional in Academia award, and of the 2018 IEEE ComSoc Radio Communications Committee Early Achievement Award. From 2015-2017, Dr. Saad was named the Stephen O. Lane Junior Faculty Fellow at Virginia Tech and, in 2017, he was named College of Engineering Faculty Fellow. He currently serves as an editor for the IEEE Transactions on Wireless Communications, IEEE Transactions on Communications, IEEE Transactions on Mobile Computing, and IEEE Transactions on Information Forensics and Security.
\end{IEEEbiography}

\end{document}